\pgfplotsset{compat=newest}
\newtheorem{theorem}{Theorem}
\newtheorem{lemma}{Lemma}
\newtheorem{corollary}{Corollary}
\newtheorem{remark}{Remark}
\newtheorem{assumption}{Assumption}
\newlength\figureheight
\newlength\figurewidth
\newlength\fwidth
\newcommand{\Rmnum}[1]{\expandafter\@slowromancap\romannumeral #1@}
\newcommand{\tikzdir}[1]{#1.tikz}
\newcommand{\revised}[1]{\textcolor{black}{#1}}
\newcommand{\todo}[1]{}
\newcommand{\executeiffilenewer}[3]{%
	\ifnum\pdfstrcmp{\pdffilemoddate{#1}}%
	{\pdffilemoddate{#2}}>0%
	{\immediate\write18{#3}}\fi%
}
\newcommand{%
	\executeiffilenewer{.svg}{.pdf}%
	{inkscape -z -D --file=.svg %
		--export-pdf=.pdf --export-latex}%
	\input{.pdf_tex}%
}[1]{%
	\executeiffilenewer{#1.svg}{#1.pdf}%
	{inkscape -z -D --file=#1.svg %
		--export-pdf=#1.pdf --export-latex}%
	\input{#1.pdf_tex}%
}
\DeclareMathOperator*{\argmax}{arg\; max}     
\DeclareMathOperator*{\tr}{tr}     
\DeclareMathOperator{\Cov}{Cov}
\DeclareMathOperator{\rank}{rank}
\DeclareMathOperator{\diag}{diag}
\DeclareMathOperator{\logdet}{log\;det}
\newcommand{\asrightarrow}{\overset{\tiny a.s.}{\rightarrow}}
\newcommand{\asequal}{\overset{\tiny a.s.}{=}}
\title{An Online Approach to Physical Watermark Design}
\author{Hanxiao Liu, Yilin Mo$^\dag$, Jiaqi Yan, Lihua Xie, and Karl H. Johansson
	\thanks{$\dag$: Corresponding Author.}
	\thanks{H. Liu is with the School of Electrical and Electronic Engineering, Nanyang Technological University, Singapore, and the Division of Decision and Control Systems, the School of Electrical Engineering and Computer Science, KTH Royal Institute of Technology, Sweden. Email: {hanxiao001@ntu.edu.sg}.
	}
	\thanks{
		Y. Mo is with the Department of Automation and BNRist, Tsinghua University, China. Email: {ylmo@tsinghua.edu.cn}.
	}
	\thanks{
		J. Yan and L. Xie are with the School of Electrical and Electronic Engineering, Nanyang Technological University, Singapore. Email: {\{jyan004, elhxie\}@ntu.edu.sg}.
	}

	\thanks{
	 K.H. Johansson is with the Division of Decision and Control Systems, the School of Electrical Engineering and Computer Science, KTH Royal Institute of Technology, Sweden. Email: {kallej@kth.se}.
	}
	\thanks{This work is supported by the A*STAR Industrial Internet of Things Research Program, under the RIE2020 IAF-PP Grant A1788a0023, the Knut and Alice Wallenberg Foundation, the Swedish Foundation for Strategic Research, and the Swedish Research Council.}
}
\begin{document} \maketitle
\begin{abstract}
This paper considers the problem of designing physical watermark signals in order to optimally detect possible replay attack in a linear time-invariant system, under the assumption that the system parameters are unknown and need to be identified online. We first provide a replay attack model, where an adversary replays the previous sensor data in order to fool the system. A physical watermarking scheme, which leverages a random control input as a watermark to detect the replay attack, is then introduced. The optimal watermark signal design problem is cast as an optimization problem, which aims to achieve the optimal trade-off between control performance and intrusion detection. An online watermarking design and system identification algorithm is provided to deal with systems with unknown parameters. We prove that the proposed algorithm converges to the optimal one and characterize the almost sure convergence rate. A numerical example and an industrial process example are provided to illustrate the effectiveness of the proposed strategy.
\end{abstract}

\begin{IEEEkeywords}
  Cyber-Physical System, Security, Intrusion Detection, System Identification
\end{IEEEkeywords}

\section{Introduction}
\IEEEPARstart{C}{yber}-Physical Systems (CPSs) offer close integration of computational elements and physical processes~\cite{lee2016introduction}. They are defined as systems where \textquotedblleft \textit{physical and software components are deeply intertwined, each operating on different spatial and temporal scales, exhibiting multiple and distinct behavioral modalities, and interacting with each other in a myriad of ways that change with context}\textquotedblright~\cite{bworld}. Such systems play a critical role in large varieties of fields, such as manufacturing, health care, environment control, transportation, etc. Due to their wide applications and critical functions, it is of paramount importance to ensure the secure operation of CPS~\cite{humayed2017cyber,sandberg2015cyberphysical}. Any successful attack on CPS may jeopardize critical infrastructure and people's lives and properties, even threaten national security. In 2010, Stuxnet malware launched a devastating attack on Iranian uranium enrichment facilities~\cite{langner2011stuxnet, falliere2011w32}. This incident raised a great deal of attention to CPS security in recent years~\cite{ani2017review}. 

However, CPS security faces a wide variety of challenges. Cardenas~\textit{et al.}~\cite{cardenas2009challenges} discussed three main challenges and identified unique properties of CPS security compared to traditional IT security. Besides, the physical part of CPS poses new security challenges. Similar discussion can be found in~\cite{neuman2009challenges}. Gollmann and Krotofil~\cite{gollmann2016cyber} pointed out that also people performing security analysis of CPS is a key challenge. The authors argued that it is difficult for people to expertise in both cyber and physical safety and able to appreciate limitations in their own domain. 

\subsection*{Literature Review}

A significant amount of research effort has been devoted to intrusion and anomaly detection algorithms to enhance CPS security. Zimmer~\textit{et al.} ~\cite{zimmer2010time} presented three mechanisms for time-based intrusion detection. The techniques, through bounds checking, were developed in a self-checking manner by the application and through the operating system scheduler. Mitchell and Chen~\cite{mitchell2011hierarchical} proposed a hierarchical performance model and techniques for intrusion detection in CPS. They classified the modern CPS intrusion detection system techniques into two classes: detection technique and audit material. They summarized advantages and disadvantages in~\cite{mitchell2014survey}. Kwon~\textit{et al.}~\cite{kwon2013security} discussed necessary and sufficient conditions {for when} the attacker could be successful without being detected. Their method can be employed to evaluate vulnerability degree of certain CPSs. Corresponding detection and defense methodologies against stealthy deception attacks can be developed. In~\cite{pasqualetti2013attack}, the authors proposed a mathematical framework for CPS and investigated limitations of the monitoring system. Centralized and distributed attack detection and identification monitors were also discussed.

In this paper, we consider the detection problem of replay attacks. In~\cite{mo2009secure,mo2015physical, mo2014detecting}, a replay attack model is defined and its effect on a steady-state control system is analyzed. An algebraic condition is provided on the detectability of the replay attack. For those systems that cannot detect replay attack efficiently, a physical watermarking scheme is proposed to enable the detection of a replay attack. In particular, by injecting a random control signal, the watermark signal, into the control system, it is possible to secure the system. However, the watermark signal may deteriorate the control performance, and therefore it is important to find the optimal trade-off between the control performance and the detection efficiency, which can be cast as an optimization problem. Similar watermarking schemes are also proposed in the literature~\cite{khazraei2017new,Satchidanandan2017, khazraei2017replay}. 

Different from the previous additive watermarking schemes, a multiplicative sensor watermarking scheme is proposed in~\cite{Ferrari2017}. In this scheme, each output is respectively fed to a SISO watermark generator and due to the inclusion of a watermark removing functionality, the control performance will not be sacrificed. Applying some techniques of non-cooperative stochastic games, Miao~\textit{et al.}~\cite{FeiMiao2013} designed a suboptimal switching control policy that balances control performance and the intrusion detection rate for replay attacks. Hoehn and Zhang~\cite{hoehn2016detection} provided a novel technique via exciting the system in non-regular time intervals and signal processing to detect the replay attack. Other replay attack detection mechanisms have also been proposed in the literature~\cite{Shoukry2015}.

It is worth noticing that in majority of the aforementioned research, the precise knowledge of the system parameters is required in order to design the watermark signal and  the detector. However, acquiring these parameters may be troublesome and costly. Moreover, for a large system, the system parameters may change during its operation. Hence, it is beneficial for the system to learn the parameters in an online fashion and automatically generate the optimal detector and the watermark signal in real-time. The problem of learning parameters of dynamical systems, system identification, has been studied over the past decades. Most methods, however, require persistent excitation on the input. 

In this paper, due to the nature of the optimal watermark signal, we shall design the input that asymptotically converges to a signal that does not satisfy the persistent excitation condition. However, by controlling the convergence rate, we can still prove that the system parameters converge to true parameters almost surely.

Some preliminaries results regarding online design of physical watermarks are contained in our former work~\cite{8619632}. The main differences between the current version of the paper and ~\cite{8619632} are: 1) we not only prove that we can asymptotically identify the system parameters, but also characterize the rate of the convergence; 2) we provide a procedure to automatically generate the Neyman-Pearson detector; 3) we add the simulation on an industrial process to verify the effectiveness of the proposed approach. 

\subsection*{Contributions}
The goal of this paper is to develop a data-driven approach to design physical watermark signals to protect systems with unknown parameters, against replay attack. The main contributions of this paper are threefold:
\begin{enumerate}
	\item An online ``learning'' algorithm is presented to simultaneously infer the parameters of the system based only on the system input and output data and generate the watermark signal as well as the optimal detector based on the estimated parameters. To the best of our knowledge, it is the first time to study the detection of replay attacks under the scenario with unknown system parameters. 
	\item We prove that the system parameters which are inferred via our proposed online algorithm converge to the true parameters almost surely even if the input signal asymptotically converges to a degenerate signal. 
	\item We also characterize the almost sure convergence rate of the estimated system parameters to the true parameters and provide an upper bound for this rate. 
\end{enumerate}

\subsection*{Outline of the Paper}
The rest of paper is organized as follows. Section~\ref{sec:problem} formulates the problem by introducing the system as well as the attack model. The physical watermarking scheme is introduced in Section~\ref{sec:watermarking}. In Section~\ref{sec:main}, we present an online algorithm to simultaneously infer the parameters of the system and design the watermark signal as well as the detector based on the estimated parameters. We further prove the almost sure convergence of the watermark signal to the optimal one and characterize the convergence rate. In Section~\ref{sec:simulation}, a numerical example and an industrial process example are provided to verify the effectiveness of the proposed technique. Concluding remarks are given in Section~\ref{sec:conclusion}. For the sake of legibility, most of the proofs are included in the appendix.

\subsection*{Notations}
$\|A\|$ of the matrix $A$ is the spectral norm of an $m\times n$ matrix $A$, which is its largest singular value. $A\otimes B$ is the Kronecker product of matrices $A$ and $B$. $A > 0$ ($A\geq 0$) indicates that $A$ is positive definite (positive semidefinite). \revised{$A^+$ denotes the pseudo-inverse of $ A $.} We say that $f(k)\sim O(g(k))$ if there exists an $M > 0$, such that $|f(k)|\leq M\times g(k)$ for all $k\in \mathbb N_0$.

\section{Problem Formulation}
\label{sec:problem}
In this section, we introduce a linear time invariant system model of CPS as well as a replay attack model, which will be employed in the rest of this paper.

We consider a linear time-invariant system described by the following equation:
\begin{align}
  x_{k} &= A x_{k-1} + B\phi_k + w_{k}, \label{eq:systemdynamic}
\end{align}	
where $x_k\in \mathbb R^{n}$ is the state vector at time $k$, and $w_{k}\in \mathbb{R}^{n}$ is a zero mean independently and identically distributed (i.i.d) Gaussian process noise with covariance $Q \geq 0$. $\phi_k\in\mathbb R^p$ is the watermark signal that will be discussed in details in Section~\ref{sec:watermarking}.

A sensor network is monitoring the above system. The observation equation is given by
\begin{align}
y_{k}  &= C x_k + v_k \label{eq:sensor}, 
\end{align}	
where $y_{k}\in \mathbb{R}^{m}$ is a collection of all sensors' measurements at time $k$. $v_{k}\in \mathbb{R}^{m}$ is a zero mean i.i.d. Gaussian measurement noise with covariance $R \geq 0$. 


\begin{remark}
	To simplify notations, in this paper we consider a stable open-loop system. However, our framework can be easily extended to a closed loop system with an unstable plant but a stabilizing controller, which is discussed in Section~\ref{sec:watermarking}.
	
	Notice that the purpose of the watermark signal is intrusion detection instead of stabilization. As a result, we only consider stable systems or systems that have been pre-stabilized by some controller.
\end{remark}

We assume that the process noise $w_0,w_1,\cdots$ and the measurement noise $v_0,v_1,\cdots$ are independent of each other. Furthermore, since CPSs usually operate for an extended period of time, it is assumed that the system is already in the steady state, which means that the initial condition $x_{-1}$ is a zero mean Gaussian random vector independent of the process noise and the measurement noise and with covariance $\Sigma$, where $\Sigma$ satisfies the following Lyapunov equation:
\begin{align}
\Sigma = A\Sigma A^T+Q.
\label{eq:Sigmadef}
\end{align}

We further make the following assumptions regarding the system parameters:
\begin{assumption}
  The system is strictly stable. Furthermore, $(A,C)$ is observable and $(A,B)$ is controllable.
\end{assumption}

\begin{remark}
  The observability and controllability assumption is without loss of generality as we can perform a Kalman decomposition~\cite{chen1998linear} and only work with the observable and controllable subspace.
\end{remark}

Next we introduce a replay attack model. We assume that the adversary has the following capabilities:
\begin{enumerate}	
\item The attacker has access to all the real-time sensory data. In other words, it knows the sensor's measurement $y_0,\cdots, y_k$ at time $k$.
\item The attacker can modify the real sensor signals $y_k$ to arbitrary sensor signals $ y_k' $.
\end{enumerate}

Given these capabilities, the adversary can employ the following replay attack strategy:
\begin{enumerate}
\item 
  The attacker records a sequence of sensor measurements $y_k$s from time $k_1$ to $k_1+T$, where $T$ is large enough to guarantee that the attacker can replay the sequence for an extended period of time during the attack. 
\item The attacker modifies the sensor measurements $y_k$ to the recorded signals from time $k_2$ to $k_2+T$, i.e.,
  \begin{align*}
    y_{k}' = y_{k-\Delta k},\ \forall\; k_2\leq k\leq (k_2+T), 
  \end{align*}
  where $\Delta k = k_2 - k_1$.
\end{enumerate}

Notice that since the system is already in the steady state, both the replayed signal $y_k'$ and the real signal $y_k$ from the sensors will share exactly the same statistics. As a result, replay attack can be stealthy for a large class of linear systems, if no watermark signal is present, i.e. $\phi_k = 0$. For more detailed discussion on the detectability of replay attack, please refer to~\cite{mo2009secure}.

Let us consider the system illustrated in Fig.~\ref{Fig:system}.
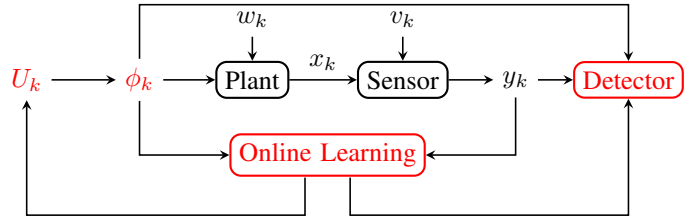
\begin{figure}[htbp]
	\centering
	\begin{tikzpicture}[->,>=stealth',
	box/.style={rectangle, draw=black,rounded corners, semithick},
	point/.style={coordinate},]
	\node [color = red](p11) at (-1.5,0) {$U_k$};
	\node [color = red](p1) at (0,0) {$\phi_k$};
	\node [draw = black, thick,rounded corners] (p2) at (1.5,0) {Plant};
	\node [draw = black, thick,rounded corners] (p3) at (3.5,0) {Sensor};
	\node (p4) at (5,0){$y_k$};
	\node (p5) at (1.5,0.8) {$w_k$};
	\node (p6) at (3.5,0.8) {$v_k$};
	\node (p7)[point] at (5,-1){};
	\node (p8)[point] at (-1.5,-1){};
	\node (p9)[point] at (0,1){};
	\node (p15)[point] at (0,-1){};
	\node (p16)[point] at (-1.5,-1.8){};
	\node (p22)[point] at (6.5,-1.8){};
	\node (p18)[point] at (2.2,-1.8){};
		\node (p21)[point] at (2.8,-1.8){};
	\node (p19)[point] at (2.2,-1.3){};
	\node (p20)[point] at (2.8,-1.3){};
	\node (p10)[point] at (-1.5,1){};
	\node (p13)[point] at (6.5,1){};
	\node [draw = red, thick,rounded corners,color = red] (p12) at (6.5,0) {Detector};
	\node [draw = red, thick,rounded corners,color = red] (p14) at (2.5,-1) {Online Learning};
	\draw [semithick,-stealth] (p11)--(p1);
	\draw [semithick,-stealth] (p1)--(p2);
	\draw [semithick,-stealth] (p2)--node[above]{$x_k$}(p3);
	\draw [semithick,-stealth] (p3)--(p4);
	\draw [semithick,-stealth] (p5)--(p2);
	\draw [semithick,-stealth] (p6)--(p3);
	\draw [semithick,-stealth] (p1)--(p15)--(p14);
	\draw [semithick,-stealth] (p4)--(p7)--(p14);
	\draw [semithick,-stealth] (p19)--(p18)--(p16)--(p11);
		\draw [semithick,-stealth] (p20)--(p21)--(p22)--(p12);
	\draw [semithick,-stealth] (p1)--(p9)--(p13)--(p12);
	\draw [semithick,-stealth] (p4)--(p12);
	\end{tikzpicture}
	\caption{The system diagram.}
	\label{Fig:system}
\end{figure}

\revised{The overarching goal of this paper is to design an online learning algorithm for the optimal replay attack detector as well as the optimal parameters $U_k$ of the physical watermark signals, based on the collected input $\phi_k$ and output $y_k$. The physical watermark scheme is introduced in detail in Section~\ref{sec:watermarking}. Based on this scheme, we develop an approach to infer the system parameters based only on the system input data $\phi_{k}$ and output data $y_k$, and design the highlighted parameters in Fig.~\ref{Fig:system}: the covariance $U_k$ of the watermark signal $\phi_k$ and the optimal detector based on the estimated parameters.}
  
\section{Physical Watermark for Systems with Known Parameters}
\label{sec:watermarking}
This section introduces the concept of physical watermark, which enables the detection of replay attack. The optimal physical watermark is derived via solving an optimization problem which aims to achieve the optimal trade-off between control performance and intrusion detection. Then we will present the extension to a closed-loop system.

\subsection{Physical Watermark Scheme}
The main idea of physical watermark is to inject a random noise $\phi_k$, which is called the watermark signal, into the system \eqref{eq:systemdynamic} to excite the system and check whether the system responds to the watermark signal in accordance to the dynamical model of the system. In this section we will restrict the watermark signal $\phi_k$ to be zero mean i.i.d. Gaussian random variables and its covariance is denoted as $U$.

In the absence of the attack, $y_{k}$ can be represented as:
\begin{align} 
  y_{k}&=\sum_{t=0}^{k} CA^{t}B  \phi_{k-t} + \sum_{t=0}^{k}  CA^{t} w_{k-t}+v_{k} + CA^{k+1} x_{-1}.
         \label{eq:yexpansion}
\end{align}
For simplicity, let us define
\begin{align}
  \varphi_{k} &\triangleq \sum_{\tau=0}^{k} H_\tau  \phi_{k-\tau}\label{eq:varphidef},\, \vartheta_k \triangleq \sum_{t=0}^{k}  CA^{t} w_{k-t}+v_{k} + CA^{k+1} x_{-1},
\end{align}
where $H_\tau$ is defined as
\begin{align}
  H_\tau\triangleq CA^\tau B.
  \label{eq:Hdef}
\end{align}
Therefore, $y_{k}$ can be simplified as:
\begin{align} 
  y_{k}=\varphi_{k} + \vartheta_{k}. \label{eq6}
\end{align}
It is easy to show that $\varphi_{k}$ is a zero mean Gaussian whose covariance converges to $\mathcal U$, where
\begin{align}
  \mathcal U \triangleq \sum_{\tau=0}^\infty H_\tau UH_\tau^T.
  \label{eq:Udef}
\end{align}
Similarly, $\vartheta_{k}$ is a zero mean Gaussian noise whose covariance is $\mathcal W = C\Sigma C^T+R$, where $\Sigma$ is defined in \eqref{eq:Sigmadef}.

On the other hand, let us consider the system under the replay attack, where the replayed $y_k'$ can be written as
\begin{align*}
  y_{k}' = y_{k-\Delta k}= \varphi_{k-\Delta k} + \vartheta_{k-\Delta k},
\end{align*}
Now since $\Delta k$ is unknown to the system operator, we shall treat $\varphi_{k-\Delta k}$ as a zero mean Gaussian random variable with covariance $\mathcal U$. As a result, $y_k'$ is a zero mean Gaussian random variable with covariance $\mathcal U+ \mathcal W$. Therefore, to detect replay attack, we need a detector to differentiate the distribution of $y_k$ under the following two hypotheses:
\begin{enumerate}
\item[$\mathcal H_0$:] The sensor measurement $y_{k}$ follows a Gaussian distribution $ \mathcal{N}_0(\varphi_{k}, \mathcal W) $.
\item[$\mathcal H_1$:] The sensor measurement $ y_{k} $ follows a Gaussian distribution $ \mathcal{N}_1(0, \mathcal U + \mathcal W) $.
\end{enumerate}

\begin{remark}
\revised{It is worth noticing that the watermark signal $\phi_0, \cdots, \phi_{k}$ are known to the system operator and detector and the conditional distribution (conditioned on $\{\phi_k\}_k$) of $y_k$ converges to a Gaussian distribution with mean $\varphi_{k}$ and covariance $\mathcal W$. }
\end{remark}
The Neyman-Pearson detector~\cite{scharf1991statistical} for hypothesis $\mathcal H_0$ versus hypothesis $\mathcal H_1$ takes the following form:
\begin{lemma}\label{theorem:npdetector}
  At time $k$, the Neyman-Pearson detector rejects $\mathcal H_0$ in favor of $\mathcal H_1$ if
  \begin{equation}\label{eq11}
      g_k  = \big(y_{k}- \varphi_{k}\big)^{{T}}  \mathcal W^{-1}\big(y_{k}- \varphi_{k}\big)- y_{k}^{{T}}  \left(\mathcal W+\mathcal U\right)^{-1}y_{k}\\
      \geq \eta,
  \end{equation}
  where \revised{$\eta$ is a threshold chosen by the system operator.} Otherwise, hypothesis $ \mathcal H_{0} $ is accepted.
\end{lemma}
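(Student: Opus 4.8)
The plan is to invoke the Neyman--Pearson lemma~\cite{scharf1991statistical}: for a prescribed false-alarm probability, the most powerful test between two simple hypotheses rejects $\mathcal H_0$ in favor of $\mathcal H_1$ exactly when the likelihood ratio $L(y_k)=p_1(y_k)/p_0(y_k)$ exceeds some threshold $\tau$. Since, conditioned on the known watermark sequence $\phi_0,\dots,\phi_k$ (hence on $\varphi_k$), both hypotheses specify a nondegenerate multivariate Gaussian, the first step is simply to write the two densities explicitly,
\begin{align*}
  p_0(y_k) &= \frac{1}{(2\pi)^{m/2}\det(\mathcal W)^{1/2}}\exp\!\Big(\!-\tfrac12 (y_k-\varphi_k)^{T}\mathcal W^{-1}(y_k-\varphi_k)\Big),\\
  p_1(y_k) &= \frac{1}{(2\pi)^{m/2}\det(\mathcal W+\mathcal U)^{1/2}}\exp\!\Big(\!-\tfrac12 y_k^{T}(\mathcal W+\mathcal U)^{-1}y_k\Big),
\end{align*}
and to form their quotient.

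Second, I would pass to logarithms so that the quotient of exponentials becomes a difference of exponents. A direct computation gives
\begin{align*}
  \log L(y_k) = \tfrac12\log\frac{\det(\mathcal W)}{\det(\mathcal W+\mathcal U)}
  + \tfrac12\Big[(y_k-\varphi_k)^{T}\mathcal W^{-1}(y_k-\varphi_k) - y_k^{T}(\mathcal W+\mathcal U)^{-1}y_k\Big].
\end{align*}
The key observation is that the determinant prefactors collapse into a single additive constant that does not depend on $y_k$, while the bracketed term is precisely the statistic $g_k$ defined in \eqref{eq11}. Thus $\log L(y_k)=\tfrac12\log\frac{\det(\mathcal W)}{\det(\mathcal W+\mathcal U)}+\tfrac12 g_k$.

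Finally, the test $L(y_k)\ge\tau$ is equivalent to $\log L(y_k)\ge\log\tau$, and rearranging to isolate the data-dependent quadratic forms moves the determinant constant together with $\log\tau$ to the right-hand side. Absorbing all of these into a single constant and renaming it $\eta=2\log\tau-\log\frac{\det(\mathcal W)}{\det(\mathcal W+\mathcal U)}$ yields exactly the stated decision rule $g_k\ge\eta$, with the monotone correspondence between $\tau$ and $\eta$ showing that $\eta$ is the free threshold to be tuned against the false-alarm constraint. I do not anticipate a genuine obstacle here, as the argument is the standard likelihood-ratio reduction; the only points requiring care are that $\mathcal W=C\Sigma C^{T}+R$ and $\mathcal W+\mathcal U$ are invertible so that the quadratic forms are well defined, and that conditioning on the operator-known watermark is what legitimately places the mean $\varphi_k$ under $\mathcal H_0$.
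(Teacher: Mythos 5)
Your proof is correct: the paper gives no proof of this lemma at all, simply citing Scharf's text for the Neyman--Pearson detector, and your likelihood-ratio reduction is precisely the standard argument that citation invokes (form the two conditional Gaussian densities, take logarithms, note the determinant prefactors are $y_k$-independent, and absorb them together with $\log\tau$ into the threshold $\eta$). Your closing remarks on the invertibility of $\mathcal W$ and $\mathcal W+\mathcal U$ and on conditioning on the operator-known watermark to justify the mean $\varphi_k$ under $\mathcal H_0$ are exactly the right points of care.
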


\begin{remark}
  For simplicity, we only consider detecting replay attack based on the current measurement $y_k$. In principle, one may take a moving horizon approach to design a detector, by considering joint distribution of $y_k,\,y_{k-1},\cdots, y_{k-\Delta t}$. However, the proposed methodology in this paper can be easily extended to multiple $y_k$s case by stacking the state vector.
\end{remark}

\begin{remark}
  \revised{It is worth noticing that since hypothesis $\mathcal H_0$ is time-varying due to the $\varphi_k$ term, the threshold $\eta$ needs to be time-varying to ensure a constant false alarm rate. If $\eta$ is still chosen as a constant instead, then the system operator could calculate the expected false alarm rate by numerical integration, since $\varphi_k$ is a stationary process.}
\end{remark}

The following theorem quantifies the performance of the detector, in terms of the expected KL-divergence between distribution $\mathcal N_0$ and $\mathcal N_1$:
\begin{theorem}\label{theorem 2_1}
  The expected KL divergence of distribution $ \mathcal{N}_{0} $ and $ \mathcal{N}_{1} $ is 
  \begin{equation}\label{eq28}
    \mathbb{E}\ D_{KL}\left(\mathcal{N}_{1}\|\mathcal{N}_{0} \right)=\tr\left(\mathcal U \mathcal W^{-1}  \right) - \frac{1}{2}\logdet\left( I+\mathcal U \mathcal W^{-1} \right).
  \end{equation}
  Furthermore, the expected KL divergence satisfies the inequality
  \begin{equation}\label{eq29}
    \begin{split}
      \frac{1}{2}\tr\left(\mathcal U \mathcal W^{-1}  \right)&\leq \mathbb{E}\ D_{KL}\left(\mathcal{N}_{1}\|\mathcal{N}_{0} \right)\\
      & \leq \tr\left(\mathcal U \mathcal W^{-1}  \right) - \frac{1}{2}\log\left[ 1+\tr\left(\mathcal U \mathcal W^{-1}\right) \right].
    \end{split}
  \end{equation}
\end{theorem}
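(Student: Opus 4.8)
The plan is to reduce the statement to the closed-form expression for the Kullback--Leibler divergence between two multivariate Gaussians and then average over the random mean $\varphi_k$. For $\mathcal N_1 = \mathcal N(0,\mathcal U+\mathcal W)$ and $\mathcal N_0 = \mathcal N(\varphi_k,\mathcal W)$ of common dimension $m$, the standard formula gives
\begin{align*}
D_{KL}(\mathcal N_1\|\mathcal N_0) = \tfrac12\Big[\tr\big(\mathcal W^{-1}(\mathcal U+\mathcal W)\big) - m + \varphi_k^T\mathcal W^{-1}\varphi_k + \log\tfrac{\det\mathcal W}{\det(\mathcal U+\mathcal W)}\Big].
\end{align*}
First I would simplify the deterministic part: writing $\tr(\mathcal W^{-1}(\mathcal U+\mathcal W)) = \tr(\mathcal U\mathcal W^{-1}) + m$ cancels the $-m$ term, and using $\det(\mathcal U+\mathcal W)/\det\mathcal W = \det(I+\mathcal W^{-1}\mathcal U) = \det(I+\mathcal U\mathcal W^{-1})$ (the last step by $\det(I+AB)=\det(I+BA)$) turns the log-term into $-\tfrac12\logdet(I+\mathcal U\mathcal W^{-1})$.

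The second step is to take the expectation. The only random quantity is the quadratic form $\varphi_k^T\mathcal W^{-1}\varphi_k$, and since $\varphi_k$ is zero mean with covariance $\mathcal U$ (its steady-state value, per~\eqref{eq:Udef}), the identity $\mathbb E[\varphi_k^T\mathcal W^{-1}\varphi_k] = \tr(\mathcal W^{-1}\mathbb E[\varphi_k\varphi_k^T]) = \tr(\mathcal U\mathcal W^{-1})$ contributes a second copy of $\tfrac12\tr(\mathcal U\mathcal W^{-1})$. Adding it to the $\tfrac12\tr(\mathcal U\mathcal W^{-1})$ already present from the trace term yields exactly~\eqref{eq28}. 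The one subtlety worth flagging is that $\varphi_k$ carries covariance $\mathcal U$ only asymptotically, so the equality is understood in the steady-state regime consistent with the definition of the two hypotheses; this is where I would be most careful in stating the result precisely.

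For the inequalities in~\eqref{eq29} I would diagonalize. Since $\mathcal U\ge 0$ and $\mathcal W>0$, the matrix $\mathcal U\mathcal W^{-1}$ is similar to the symmetric positive semidefinite matrix $\mathcal W^{-1/2}\mathcal U\mathcal W^{-1/2}$, hence has real nonnegative eigenvalues $\lambda_1,\dots,\lambda_m$. Then $\tr(\mathcal U\mathcal W^{-1})=\sum_i\lambda_i$ and $\logdet(I+\mathcal U\mathcal W^{-1})=\sum_i\log(1+\lambda_i)$, reducing everything to scalar inequalities. The lower bound follows from $\log(1+\lambda)\le\lambda$ applied termwise, which gives $\tfrac12\logdet(I+\mathcal U\mathcal W^{-1})\le\tfrac12\tr(\mathcal U\mathcal W^{-1})$. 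The upper bound follows from $1+\sum_i\lambda_i\le\prod_i(1+\lambda_i)$ — immediate upon expanding the product since all cross terms are nonnegative — which yields $\log[1+\tr(\mathcal U\mathcal W^{-1})]\le\logdet(I+\mathcal U\mathcal W^{-1})$.

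There is essentially no deep obstacle here; the work is bookkeeping, and the main thing to verify is the clean cancellation that makes~\eqref{eq28} so compact: the $-m$ from the KL formula must cancel the $+m$ from the trace, and the averaged quadratic form must reproduce precisely the trace coefficient so that the two $\tfrac12\tr(\mathcal U\mathcal W^{-1})$ contributions combine. I would also double-check the orientation $D_{KL}(\mathcal N_1\|\mathcal N_0)$, which fixes $\mathcal W$ (not $\mathcal U+\mathcal W$) as the reference covariance, so that the quadratic form is measured in the $\mathcal W^{-1}$ metric; reversing the arguments would alter both the exact formula and the bounds.
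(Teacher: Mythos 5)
Your proof is correct and takes essentially the same route as the paper's: the paper gives no inline argument but defers to the proof in~\cite{mo2015physical}, which is exactly the standard computation you reproduce --- the closed-form Gaussian KL divergence, averaging the quadratic form $\varphi_k^T\mathcal W^{-1}\varphi_k$ against the steady-state covariance $\mathcal U$ to produce the second $\tfrac12\tr(\mathcal U\mathcal W^{-1})$ term, and the termwise eigenvalue inequalities $\log(1+\lambda)\leq\lambda$ and $1+\sum_i\lambda_i\leq\prod_i(1+\lambda_i)$ for the two bounds. Your flagging of the asymptotic nature of $\Cov(\varphi_k)\to\mathcal U$ and of the orientation $D_{KL}(\mathcal N_1\|\mathcal N_0)$ (which fixes $\mathcal W$ as the reference covariance) are the right places to be careful, and your writeup in effect supplies the details the paper omits.
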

\begin{proof}
  The proof is essentially the same as the proof in~\cite{mo2015physical}. 
\end{proof}
\begin{remark}
  It is worth noticing that the expected KL-divergence is a convex function of $\mathcal U$ and hence $U$. However, both the upper and lower bounds of it are increasing functions of $\tr(\mathcal U\mathcal W^{-1})$. Hence, instead of directly maximizing the detection performance, which is computationally difficult, we could maximize $\tr(\mathcal U\mathcal W^{-1})$, which is linear with respect to $U$.
\end{remark}

Note that although the watermark signal can enable the detection of replay attack, it also deteriorates the system control performance. As a result, it is important to design the signal to achieve the optimal trade-off between the control performance loss and the detection performance. In this paper, to quantify the performance loss, we use the following Linear Quadratic Gaussian (LQG) metric:
\begin{equation}\label{eq3}
  J = \lim_{T \to +\infty} \mathbb E \left(\frac{1}{T}\sum_{k=0}^{T-1} \begin{bmatrix}
      y_k\\
      \phi_k
    \end{bmatrix}^TX \begin{bmatrix}
      y_k\\
      \phi_k
    \end{bmatrix} \right),
\end{equation}
where 
\begin{align*}
  X  = \begin{bmatrix}
    X_{yy}&X_{y\phi}\\
    X_{\phi y}&X_{\phi\phi}\\
  \end{bmatrix} > 0
\end{align*}
\revised{is the weight matrix for the LQG control, which is chosen by the system operator.
\begin{remark}
	The LQG cost is a common choice to quantify the performance of a system running in steady state. On the other hand, we do not foresee any fundamental difficulty to incorporate other performance metrics into our framework, as long as they can be computed from the Markov parameters $H_\tau$.
\end{remark}}

Since $y_k$ and $\phi_k$ converge to a stationary process, $J$ can be written in an analytical form as
\begin{align*}
  J = \lim_{k\rightarrow}\tr\left(X\Cov\left(\begin{bmatrix}
        y_k\\
        \phi_k 
      \end{bmatrix}
  \right) \right) =\tr\left(X \begin{bmatrix}
    \mathcal W + \mathcal U & H_0U\\
    UH_0^T & U
  \end{bmatrix} \right).
\end{align*}
Therefore, $J$ is an affine function of $U$, which can be written as
\begin{align*}
  J = J_0 + \Delta J = \tr(X_{yy}\mathcal W) + \tr(XS),
\end{align*}
where $J_0$ is the optimal LQG cost, and $S$ is linear with respect to $U$, being defined as
\begin{align*}
  S \triangleq \begin{bmatrix}
    \mathcal U & H_0U\\
    UH_0^T & U
  \end{bmatrix}.
\end{align*}

Therefore, in order the achieve the optimal trade-off between the control performance and detection performance, we can formulate the following optimization problem: 
\begin{align}
  U_* =  & \argmax_{U\geq 0} & & \tr (\mathcal U\mathcal W^{-1})\nonumber\\
       & \text{subject to} & &  \tr(XS) \leq \delta,\label{eq:opt}
\end{align}
where $\delta$ is a design parameter depending on how much control performance loss is tolerable.

An important property of the optimization problem~\eqref{eq:opt} is that the optimal solution is usually a rank-$1$ matrix, which is formalized by the following theorem:
\begin{theorem}
  \label{theorem:rankone}
  The optimization problem~\eqref{eq:opt} is equivalent to
  \begin{align}
    U_* =  & \argmax_{U\geq 0} & & \tr (U\mathcal P)\nonumber\\
         & \text{subject to} & &  \tr(U\mathcal X) \leq \delta,\label{eq:opt2}
  \end{align}
  where
  \begin{align}
    \mathcal P &\triangleq\sum_{\tau=0}^\infty H_\tau^T\mathcal W^{-1}H_\tau,\\
    \mathcal X &\triangleq\left(\sum_{\tau=0}^\infty H_\tau^T X_{yy}H_\tau \right)+ H_0^TX_{y\phi} + X_{\phi y}H_0+ X_{\phi \phi}.
  \end{align}
  The optimal solution to \eqref{eq:opt2} is 
  \begin{align*}
    U_* = zz^T,
  \end{align*}
  where $z$ is the eigenvector corresponding to the maximum eigenvalue of the matrix $\mathcal X^{-1}\mathcal P$ and $z^T\mathcal Xz = \delta$. Furthermore, the solution is unique if $\mathcal X^{-1}\mathcal P$ has only one maximum eigenvalue.
\end{theorem}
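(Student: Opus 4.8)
The plan is to split the argument into two stages. \textbf{Stage 1} shows that~\eqref{eq:opt} and~\eqref{eq:opt2} share the same objective and the same feasible set, by rewriting both the objective and the constraint of~\eqref{eq:opt} as \emph{linear} functionals of $U$. \textbf{Stage 2} solves~\eqref{eq:opt2} in closed form.

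For Stage 1, I would start from $\mathcal U=\sum_{\tau=0}^\infty H_\tau U H_\tau^T$ and repeatedly invoke the cyclic invariance of the trace. The objective becomes
\[
\tr(\mathcal U\mathcal W^{-1})=\sum_{\tau=0}^\infty\tr\!\big(U H_\tau^T\mathcal W^{-1}H_\tau\big)=\tr(U\mathcal P).
\]
For the constraint, I would expand the block product $XS$ and take its trace block-by-block, yielding $\tr(X_{yy}\mathcal U)+\tr(X_{y\phi}UH_0^T)+\tr(X_{\phi y}H_0U)+\tr(X_{\phi\phi}U)$; pushing each term onto $U$ (again by the cyclic property, and using the definition of $\mathcal U$ on the first term) and collecting reproduces exactly $\tr(U\mathcal X)$. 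Strict stability in Assumption~1 makes $H_\tau=CA^\tau B$ decay geometrically, so both series defining $\mathcal P$ and $\mathcal X$ converge and these rearrangements are legitimate. This establishes the equivalence of the two programs.

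For Stage 2, the first substantive point is that $\mathcal X>0$. I would prove this through the identity
\[
u^T\mathcal X u=\begin{bmatrix}H_0u\\ u\end{bmatrix}^T X\begin{bmatrix}H_0u\\ u\end{bmatrix}+\sum_{\tau\geq 1}u^T H_\tau^T X_{yy}H_\tau u,
\]
whose first term is strictly positive for $u\neq 0$ because $X>0$ and $\begin{bmatrix}H_0u\\ u\end{bmatrix}\neq 0$, while the second is nonnegative since $X_{yy}>0$ as a principal block of $X$. With $\mathcal X>0$ I would apply the congruence $V=\mathcal X^{1/2}U\mathcal X^{1/2}$, which maps the feasible set bijectively onto $\{V\geq 0:\tr(V)\leq\delta\}$ and turns the objective into $\tr(V\tilde{\mathcal P})$ with $\tilde{\mathcal P}\triangleq\mathcal X^{-1/2}\mathcal P\mathcal X^{-1/2}\geq 0$. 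The bound $\tr(V\tilde{\mathcal P})\leq\lambda_{\max}(\tilde{\mathcal P})\tr(V)\leq\delta\,\lambda_{\max}(\tilde{\mathcal P})$ is then tight at $V_*=\delta\,\tilde z\tilde z^T$, where $\tilde z$ is a unit eigenvector for $\lambda_{\max}(\tilde{\mathcal P})$. Undoing the substitution gives $U_*=\delta(\mathcal X^{-1/2}\tilde z)(\mathcal X^{-1/2}\tilde z)^T=zz^T$ with $z\triangleq\sqrt\delta\,\mathcal X^{-1/2}\tilde z$; since $\mathcal X^{-1}\mathcal P=\mathcal X^{-1/2}\tilde{\mathcal P}\mathcal X^{1/2}$ is similar to the symmetric $\tilde{\mathcal P}$, $z$ is proportional to the eigenvector of $\mathcal X^{-1}\mathcal P$ for its largest eigenvalue, and $z^T\mathcal Xz=\delta\,\tilde z^T\tilde z=\delta$, matching the claimed normalization.

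Uniqueness follows from the same picture: when $\lambda_{\max}(\tilde{\mathcal P})$ is simple, any optimal $V$ must be supported on the one-dimensional top eigenspace and have trace exactly $\delta$, forcing $V_*=\delta\,\tilde z\tilde z^T$ and hence a unique $U_*$. I expect the only non-mechanical steps to be the verification of $\mathcal X>0$ (which legitimizes the congruence and is where the structure of $X$ and the Markov parameters genuinely enters) and the uniqueness argument; the reductions in Stage~1 and the eigenvalue bound in Stage~2 are otherwise routine.
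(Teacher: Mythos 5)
Your proof is correct, but its core differs genuinely from the paper's. Stage~1 (rewriting both programs as linear functionals of $U$ via cyclic invariance of the trace) coincides with the paper, and your positivity argument for $\mathcal X$ is a transparent quadratic-form version of the paper's Schur-complement bound $\mathcal X \geq X_{\phi\phi}-X_{\phi y}X_{yy}^{-1}X_{y\phi}>0$. The divergence is in Stage~2: the paper first establishes that a rank-one optimizer exists by decomposing any higher-rank optimal $U_*$ into a convex combination $\sum_i \alpha_i U_i$ of rank-one feasible matrices with $\tr(U_i\mathcal X)=\delta$ (invoking Theorem~7 of an earlier work by Mo \emph{et al.}), and then solves the resulting vector problem $\max_{z}\, z^T\mathcal P z$ s.t.\ $z^T\mathcal X z\leq\delta$ by Lagrange multipliers and enumeration of the generalized eigenvectors. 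You instead whiten the constraint through the congruence $V=\mathcal X^{1/2}U\mathcal X^{1/2}$ and apply the elementary inequality $\tr(V\tilde{\mathcal P})\leq\lambda_{\max}(\tilde{\mathcal P})\tr(V)$, which certifies global optimality of $V_*=\delta\,\tilde z\tilde z^T$ in one step. Your route buys three things: it is self-contained (no external decomposition lemma), it proves global rather than merely first-order optimality, and it makes the uniqueness claim--which the paper only asserts is ``not difficult''--an immediate consequence of the equality conditions (any optimal $V\geq 0$ must annihilate the non-maximal eigenspaces). The paper's route, conversely, reflects the standard extreme-point reasoning for such semidefinite programs and connects the result to the prior watermarking literature. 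One small point to patch in your uniqueness step: forcing $\tr(V)=\delta$ exactly requires $\lambda_{\max}(\tilde{\mathcal P})>0$, which holds because $\mathcal P=\sum_\tau H_\tau^T\mathcal W^{-1}H_\tau$ cannot vanish for a controllable and observable realization (not all $H_\tau=CA^\tau B$ are zero); if instead $\tilde{\mathcal P}=0$, the uniqueness hypothesis of the theorem fails anyway, so the claim is vacuous there.
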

\begin{proof}
  From the definition of $\mathcal U$, we know that
  \begin{align*}
    \tr(\mathcal U\mathcal W^{-1})  &= \sum_{\tau=0}^\infty\tr\left(H_\tau UH_\tau^T\mathcal W^{-1}\right)\\
                                    &= \sum_{\tau=0}^\infty\tr\left(UH_\tau^T\mathcal W^{-1}H_\tau\right)= \tr\left(U\mathcal P\right).
  \end{align*}
  Following similar steps as in the above proof, we have that $\tr(XS) = \tr(U\mathcal X)$. Moreover, since $X>0$, we have that
\begin{align*}
  \mathcal X &\geq H_0^TX_{yy}H_0 + H_0^TX_{y\phi}+X_{\phi y} H_0+ X_{\phi \phi} \\
             & \geq X_{\phi\phi} - X_{\phi y}X_{yy}^{-1}X_{y\phi} > 0.
\end{align*}
  If the optimal $U_*$ has rank greater than $1$, then follow the same line of argument in the proof of Theorem~7 in \cite{mo2014detecting}, $U_*$ can be decomposed as $U = \alpha_1 U_1+\cdots \alpha_l U_l$, where the following holds
  \begin{enumerate}
  \item $\alpha_i > 0$, $\sum_{i=1}^l \alpha_i = 1$.
  \item $U_i \geq 0$ is of rank $1$ and $\tr(U_i\mathcal X) = \delta$ for all $i$.
  \end{enumerate}
  
  Therefore, by the optimality of $U_*$, we can conclude that
  \begin{align*}
   \tr(U\mathcal P)\leq \min_{i=1,\ldots,l}\tr(U_i\mathcal P). 
  \end{align*}
  However, since $U_*$ is a convex combination of $U_1,\ldots,U_l$, we must have
  \begin{align*}
    \tr(U_*\mathcal P) = \tr(U_1\mathcal P) = \cdots = \tr(U_l\mathcal P),
  \end{align*}
  which shows that the rank one matrix $U_i$ is also optimal.
  
  In order to derive the optimal rank one $U_*$, it is clear that $U_*= zz^T$ for some $z \neq 0$. Hence, the optimization problem~\eqref{eq:opt2} is converted to
  \begin{align*}
    z =&  \argmax_{z\neq 0} & & z^T\mathcal P z\\
       &\text{subject to} & &z^T\mathcal Xz \leq \delta.
  \end{align*}
  Using the Lagrangian multipliers, one can prove that $\mathcal Pz = \lambda \mathcal X z$, which shows that $z$ is the eigenvector of $\mathcal X^{-1}\mathcal P$. If we enumerate all eigenvectors of $\mathcal X^{-1}\mathcal P$, it is not difficult to prove that the maximum is achieved when $z$ is the eigenvector corresponds to the largest eigenvalue of $\mathcal X^{-1}\mathcal P$ and $z^T\mathcal X z = \delta$. 
\end{proof}

\subsection{Extension to Closed-loop Systems}
Before continuing on to the next section, we would like to discuss how to generalize the problem formulation for a closed-loop system with a stabilizing controller. Consider the following system discussed in~\cite{mo2009secure}: 
\begin{align*}
  x_{k+1} &= A x_k + B (u_k + \phi_k) + w_k,\; y_k = Cx_k + v_k, 
\end{align*}
with the following estimator and controller:
\begin{align*}
  \hat x_{k+1} &= A\hat x_k + K(y_{k+1}-CA\hat x_k),\; u_k = L \hat x_k,
\end{align*}
and LQG cost as
\begin{align*}
  J = \lim_{T\to\infty} \frac{1}{T}\mathbb E \left [\sum_{k=0}^{T-1} y_k^T X_{yy} y_k + (u_k+\phi_k)^TX_{\phi\phi}(u_k+\phi_k)\right],
\end{align*}
where $u_k$ denotes the optimal LQG control signal.


We can redefine the state $\tilde x_k$ and output $\tilde y_k$ as
\begin{align*}
  \tilde x_k = \begin{bmatrix}
    x_k\\
    \hat x_k
  \end{bmatrix} ,\text{ and } \tilde y_k = \begin{bmatrix}
    y_k\\
    u_k 
  \end{bmatrix},
\end{align*}
and the design of watermark signal in a closed-loop system can be converted to the open-loop formulation.

It is worth noticing that in order to design the detector and the optimal watermark signal, precise knowledge of the system parameters is needed. However, acquiring the parameters may be troublesome and costly. Furthermore, there may be unforeseen changes in the model of the system, such as topological changes in power systems. As a result, the identified system model may change during the system operation. Therefore, it is beneficial for the system to ``learn'' the parameters and design the detector and watermark signal in real-time, which will be our focus in the next section.

\section{Physical Watermark for Systems with Unknown Parameters}
\label{sec:main}
This section is devoted to developing an online ``learning'' procedure to infer the system parameters, based on which, we show how to design watermark signals and the optimal detector and prove that the physical watermark and the detector asymptotically converge to the optimal ones. 

Throughout the section, we make the following assumptions:
\begin{assumption}\label{as2}
  \begin{enumerate}
  \item $A$ is diagonalizable.
  \item The maximum eigenvalue of $\mathcal X^{-1}\mathcal P$ is unique.
  \item The system is not under attack during the learning phase.
  \item The number of distinct eigenvalues of $A$, which is denoted as $\tilde n$, is known.
  \item The LQG weight matrix $X$ and the largest tolerable LQG loss $\delta$ are known.
  \end{enumerate}
\end{assumption}

\begin{remark}
The first and second assumptions are required in order to ensure that the optimal covariance of the watermark signal is a differentiable function of $H_\tau$, \revised{i.e., the problem is not ill-conditioned. The third assumption is necessary since there is no way to do system identification without (real) sensory data and it is also needed to prove the asymptotic convergence of our algorithm to the true optimal solution as this cannot be achieved in finite time due to the inherent process and measurement noise.} Nevertheless, we shall illustrate through simulation, that after a certain period of learning phase, our algorithm can approximate the optimal solution with reasonably well accuracy and the system can detect replay attack. The fourth assumption is also required to prove convergence, although we shall demonstrate in the simulation that we can use a reduced model to approximate the system with good accuracy. The fifth assumption should hold for all practical cases as $X$ and $\delta$ are design parameters chosen by the system operator.
\end{remark}

For the sake of legibility, we shall introduce our algorithm first and present the theorem on the correctness of our approach in the end.

\subsection{An Online Algorithm}
In this subsection, we will present the complete algorithm in a pseudo-code form. After that, the online ``learning'' scheme will be introduced in detail.

Algorithm~\ref{euclid} describes our proposed online watermarking algorithm. The notations are described later in the subsection.

First, we initialize some parameters which will be used later. In each round of the \textbf{while} iteration, the optimal covariance of the watermarking $U_{k,*}$ based on current knowledge is computed firstly. Based on the derived covariance, one can update the covariance $U_{k}$ by combining ``exploration'' and ``exploitation'' term which will be described in detail later. According to the updated covariance, we generate the watermarking signals $\phi_{k}$ and inject them to the plant. Then we collect the sensory data $y_k$ and employ them and watermarking signals to infer necessary system parameters $H_{k,\tau}, \mathcal P_k, \mathcal X_k$. Based on the estimated parameters, one can update the Neyman-Pearson detector $\hat g_k$. Then one can repeat the above process to identify system parameters and design the watermarking signals as well as the optimal detector.

A pseudo-code form for Algorithm~\ref{euclid} is as follows:

\begin{algorithm}
	\caption{Online Watermarking Design}\label{euclid}
	\begin{algorithmic}[1]
		\Require $\mathcal P_{-1}\gets I,\,\mathcal X_{-1}\gets X_{\phi\phi}, k \gets 0$
		\Ensure
		\While{true}
		\State $U_{k,*} \gets \argmax_{U\geq 0,\,\tr(U\mathcal X_{k-1})\leq \delta} \tr(U\mathcal P_{k-1})$
		\State $U_k \gets U_{k,*} + (k+1)^{-\beta}\delta I $
		\State Generate random variable $\zeta_k\sim\mathcal N(0,I)$ 
		\State Apply watermark signal $\phi_k \gets U_k^{1/2}\zeta_k$
		\State Collect sensory data $y_k$
		\State $H_{k,\tau} \gets \frac{1}{k-\tau+1} \sum_{t=\tau}^k y_t \phi_{t-\tau}^T U_{t-\tau}^{-1}$
		\State Compute the coefficient of $p_k(x)$ by solving \eqref{eq:optcharpoly}
		\If{$p_k(x)$ is Schur stable}
		\State Update $\mathcal P_k,\mathcal X_k$ from \eqref{eq:Omegaest}-\eqref{eq:Xest}
		\EndIf 
		
		\State Update $\hat g_k$ from \eqref{eq:gest}
		\State $k \gets k+1$
		\EndWhile\label{euclidendwhile}
	\end{algorithmic}
\end{algorithm}
\begin{remark}
	For Algorithm~\ref{euclid}, $\mathcal{P}_k, \mathcal{X}_k$ are defined in~\eqref{eq:learningPX}, $U$ is the covariance of watermarking signal, and $H_{k,\tau}$ is defined in~\eqref{eq:Htauest}. Step 3 is the update of the covariance of the physical watermark in~\eqref{eq:Ukdef}. All parameters will be illustrated in the following subsections.
\end{remark}

Then we will introduce this algorithm in detail.

\subsection*{Generation of the Watermark Signal $\phi_k$}
Let us design $U_k$, which can be considered as an approximation for the optimal covariance of the watermark signal $U$, as 
\begin{align}
  U_{k} = U_{k,*} + \frac{\delta}{(k+1)^\beta} I,
  \label{eq:Ukdef}
\end{align}
where $0<\beta<1$, $\delta$ is the maximum tolerable LQG loss defined in \eqref{eq:opt}, and $U_{k,*}$ is the solution of the following optimization problem
\begin{align}
  U_{k,*} =  & \argmax_{U\geq 0} & & \tr (U\mathcal P_{k-1}),\nonumber\\
             & \text{subject to} & &  \tr(U\mathcal X_{k-1}) \leq \delta,\label{eq:learningPX}
\end{align}
and $\mathcal P_{k-1}$ and $\mathcal X_{k-1}$ are the estimate of $\mathcal P$ and $\mathcal X$ matrices, respectively, based on $y_0,\ldots, y_{k-1},\phi_0,\ldots, \phi_{k-1}$, both of which are initialized as:
\begin{align*}
\mathcal P_{-1} = I,\,\mathcal X_{-1} = X_{\phi\phi}.
\end{align*}
The inference procedure of $\mathcal P_k$ and $\mathcal X_k$ for $k \geq 0$ will be provided in the further subsections.

\begin{remark}
Notice that the second term $(k+1)^{-\beta} I$ on the RHS of \eqref{eq:Ukdef} is crucial for parameter identification. The reason is that $U_{k,*}$ is in general a rank $1$ matrix (as is proved in Thereom~\ref{theorem:rankone}) and hence it does not provide persistent excitation to the system for us to identify the necessary parameters. Conceptually, the $(k+1)^{-\beta}I$ term can be interpreted as an ``exploration'' term, as it provide necessary excitation to the system in order for us to infer the parameters. The $U_{k,*}$ is the ``exploitation'' term, as it is optimal under our current knowledge of the system parameters.
\end{remark}

At each time $k$, the watermark signal is chosen to be
\begin{align}
\phi_k = U_k^{1/2}\zeta_k ,
  \label{eq:phizeta}
\end{align}
where $\zeta_k$s are i.i.d. Gaussian random vectors with covariance $I$.

\subsection*{Inference on $H_\tau$}

The rest of this section is devoted to inferencing the system parameters from the collected sensory data $y_0,\ldots,y_k$ and watermarks $\phi_0,\ldots,\phi_k$. We will first identify the Markov parameters $H_\tau$ of the system.

Let us define the following quantity $H_{k,\tau}$, where $0\leq \tau\leq 3\tilde n-2$, as  
\begin{align}
  H_{k,\tau} &\triangleq \frac{1}{k-\tau+1} \sum_{t=\tau}^k y_t \phi_{t-\tau}^T U_{t-\tau}^{-1}\nonumber\\
  & = H_{k-1,\tau} + \frac{1}{k-\tau+1}\left(y_k \phi_{k-\tau}^T U_{k-\tau}^{-1} - H_{k-1,\tau}\right),\label{eq:Htauest}
\end{align}
\revised{where $H_{k,\tau}$ is an estimate of $H_{\tau}$ at time $k$.}

\begin{remark}
  It is worth noticing that other methods, such as subspace identification, may be superior for classical system identification tasks to the method we proposed. However, since the covariance of our watermark signal converges to a degenerate matrix (of rank $1$), it is non-trivial to analyze the convergence properties for more advanced system identification methods, such as subspace identification, which we shall leave as a further research direction.
\end{remark}

It is worth noticing that the calculation of the matrices $\mathcal U,\,\mathcal W,\,\mathcal P$ and $\mathcal X$ requires $H_\tau$ for all $\tau \geq 0$. Next we shall show that in fact only finitely many $H_\tau$s are needed to compute those matrices, which requires one intermediate result:

\begin{lemma}
  Assuming the matrix $A$ is diagonalizable with $\lambda_1,\ldots,\lambda_{\tilde n}$ being its distinct eigenvalues, then there exist unique $\Omega_1,\cdots,\Omega_{\tilde n}$, such that
  \begin{align}
    H_\tau = \sum_{i=1}^{\tilde n} \lambda_i^\tau \Omega_i.
  \end{align}
  \label{lemma:finitetoinf}
\end{lemma}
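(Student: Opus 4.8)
The plan is to obtain existence from the spectral decomposition of the diagonalizable matrix $A$, and uniqueness from a Vandermonde argument. Recall that $H_\tau = CA^\tau B$, so everything reduces to expanding $A^\tau$ in terms of the distinct eigenvalues.

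First I would write the spectral representation of the diagonalizable matrix. Since $A$ is diagonalizable with distinct eigenvalues $\lambda_1,\ldots,\lambda_{\tilde n}$, it admits $A = \sum_{i=1}^{\tilde n}\lambda_i P_i$, where $P_i$ is the projector onto the eigenspace of $\lambda_i$ along the sum of the remaining eigenspaces, and the projectors satisfy $P_iP_j = \delta_{ij}P_i$ and $\sum_{i=1}^{\tilde n}P_i = I$. These relations give $A^\tau = \sum_{i=1}^{\tilde n}\lambda_i^\tau P_i$ for every $\tau\geq 0$. Substituting into $H_\tau = CA^\tau B$ and setting $\Omega_i \triangleq CP_iB$ yields $H_\tau = \sum_{i=1}^{\tilde n}\lambda_i^\tau\Omega_i$, which establishes existence. (Concretely one may instead write $A = V\Lambda V^{-1}$ and group the columns of $V$ and the rows of $V^{-1}$ by distinct eigenvalue to exhibit each $P_i$ explicitly, avoiding any appeal to abstract spectral theory.)

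For uniqueness, suppose $\{\Omega_i\}$ and $\{\Omega_i'\}$ both realize the identity for all $\tau\geq 0$, and set $D_i \triangleq \Omega_i - \Omega_i'$, so that $\sum_{i=1}^{\tilde n}\lambda_i^\tau D_i = 0$ for every $\tau$. Restricting to $\tau = 0,1,\ldots,\tilde n-1$ and reading off a single fixed scalar entry of these matrix equations gives a homogeneous linear system whose coefficient matrix is the $\tilde n\times\tilde n$ Vandermonde matrix $[\lambda_i^\tau]_{\tau,i}$. Since the $\lambda_i$ are pairwise distinct this matrix is nonsingular, forcing every entry of each $D_i$ to vanish, hence $\Omega_i = \Omega_i'$ for all $i$.

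The argument is essentially routine, so I do not anticipate a serious obstacle; the only point requiring a little care is that the $\lambda_i$, and therefore the $\Omega_i$, may be complex even though each $H_\tau$ is real, so the decomposition should be read over $\mathbb{C}$. The mildly delicate step is cleanly justifying the spectral-projector representation, which the explicit $A = V\Lambda V^{-1}$ construction resolves without machinery.
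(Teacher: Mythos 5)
Your proof is correct, and its existence half is essentially the paper's own argument: the paper assumes without loss of generality that $A$ is diagonal (equivalently, it absorbs $V$ and $V^{-1}$ from $A = V\Lambda V^{-1}$ into $C$ and $B$) and sets $\Omega_i = C\diag(0,\ldots,0,I_i,0,\ldots,0)B$, which is exactly your $CP_iB$ written in the diagonalizing coordinates. The genuine difference is the uniqueness half: the paper's proof stops after exhibiting the $\Omega_i$ and never addresses uniqueness, even though the lemma statement asserts it. Your Vandermonde argument fills that gap cleanly, and it is not merely cosmetic, since uniqueness (equivalently, injectivity of the map from $(\Omega_1,\ldots,\Omega_{\tilde n})$ to the sequence of Markov parameters) is precisely what the paper relies on later in the appendix, where it recovers the $\Omega_i$ from $H_0,\ldots,H_{3\tilde n-2}$ via the pseudo-inverse of $V\otimes I_m$ and separately argues that this Vandermonde-type matrix has full column rank because the $\lambda_i$ are distinct; your entrywise Vandermonde argument is the same nonsingularity fact, deployed where it logically belongs. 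Your remark that the $\lambda_i$ and hence the $\Omega_i$ may be complex, so the decomposition should be read over $\mathbb{C}$, is also a point the paper glosses over. In short: same existence mechanism, but your write-up is the more complete proof of the lemma as stated.
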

\begin{proof}
  Without loss of generality, we assume that $A$ is a diagonal matrix. As a result,
  \begin{align*}
    A^{\tau} = \diag(\lambda_1^{\tau}I_1, \lambda_2^{\tau}I_2, \cdots, \lambda_{\tilde n}^{\tau}I_{\tilde n}) , 
  \end{align*}
  where $\lambda_i$ denotes the $i$th distinct eigenvalue of $A$, \revised{$\lambda_i^{\tau}$ denotes $\lambda_i$ to the power of $\tau$}, and $I_i$ is the identity matrix of size $n_i$ by $n_i$ with $n_i$ the multiplicity of $\lambda_i$. Hence, we have
  \begin{align*}
    H_\tau &= CA^\tau B = \sum_{i=1}^{\tilde n} \lambda_i^\tau\Omega_i,
  \end{align*}
  with $\Omega_i = C\diag(0,\ldots,0,I_i,0,\ldots,0)B$, which completes the proof.
\end{proof}
Since $A$ satisfies its own minimal polynomial $p(x) = \prod_{i=1}^{\tilde n}(x-\lambda_i)=x^{\tilde n} + \alpha_{\tilde n-1}x^{\tilde n-1}+\ldots+\alpha_0$, we know that for any $i\geq 0$:
\begin{align}
  H_{i+\tilde n}+\alpha_{\tilde n-1}H_{i+\tilde n-1}+\cdots+\alpha_0H_i = CA^ip(A)B = 0.
  \label{eq:hminimalpoly}
\end{align}
Leveraging \eqref{eq:hminimalpoly}, we could use $H_0, H_1, \cdots,H_{3\tilde n-2}$ to estimate both $\lambda_i$s and $\Omega_i$s and thus $H_\tau$ for any $\tau$. To this end, let us define:
\begin{align}
  \label{eq:optcharpoly}
  \begin{bmatrix}
      \alpha_{k,0}\\
      \vdots\\
      \alpha_{k,\tilde n-1}
    \end{bmatrix} \triangleq -\Xi_k^{-1}\begin{bmatrix}
\tr(\mathcal H_{k,0}^T\mathcal H_{k,\tilde n})\\
\vdots\\
\tr(\mathcal H_{k,\tilde n-1}^T\mathcal H_{k,\tilde n})
    \end{bmatrix},
\end{align}
where
\begin{align*}
 \Xi_k \triangleq \begin{bmatrix}
\tr(\mathcal H_{k,0}^T\mathcal H_{k,0}) & \cdots & \tr(\mathcal H_{k,0}^T\mathcal H_{k,\tilde n-1}) \\
\vdots&\ddots&\vdots\\
\tr(\mathcal H_{k,\tilde n-1}^T\mathcal H_{k,0}) & \cdots & \tr(\mathcal H_{k,\tilde n-1}^T\mathcal H_{k,\tilde n-1}) \\
    \end{bmatrix},
\end{align*}
and
\begin{align*}
  \mathcal H_{k,i}\triangleq\begin{bmatrix}
    H_{k,i}\\
    H_{k,i+1}\\
    \vdots \\
    H_{k,i+2\tilde n-2}\\
  \end{bmatrix}.
\end{align*}

\begin{remark}
  One can prove that $\alpha_{k,i}$ from \eqref{eq:optcharpoly} is the solution of the following minimization problem:
  \begin{align*}
 \min \|\mathcal H_{k,\tilde n}+\alpha_{\tilde n-1}\mathcal H_{k,\tilde n-1}+\cdots+\alpha_0\mathcal H_{k,0}\|_F,
  \end{align*}
  where $\|\cdot\|_F$ denotes the Frobenius norm of a matrix.
\end{remark}

Let us denote the roots of the polynomial $p_k(x) = x^{\tilde n}+\alpha_{k,{\tilde n}-1}x^{{\tilde n}-1}+\cdots+\alpha_{k,0}$ to be $\lambda_{k,1},\cdots,\lambda_{k,{\tilde n}}$. Define a Vandermonde like matrix $V_k$ to be
\begin{align*}
  V_k\triangleq \begin{bmatrix}
    1 & 1 &\cdots&1\\
    \lambda_{k,1} & \lambda_{k,2} &\cdots&\lambda_{k,{\tilde n}}\\
    \vdots & \vdots &\ddots&\vdots\\
    \lambda^{3\tilde n-2}_{k,1} & \lambda^{3\tilde n-2}_{k,2} &\cdots&\lambda^{3\tilde n-2}_{k,\tilde n}\\
  \end{bmatrix},
\end{align*}
\revised{where $\lambda_{k,i}$ is an estimate of $\lambda_{i}$ at time $k$ and $\lambda_{k,i}^\tau$ is $\lambda_{k,i}$ to the power of $\tau$},
and we shall estimate $\Omega_i$ as
\begin{align}
  \begin{bmatrix}
    \Omega_{k,1}\\
    \vdots\\
    \Omega_{k,\tilde n}
  \end{bmatrix} =  \left(V_k\otimes I_m \right)^+\begin{bmatrix}
      H_{k,0}\\
      \cdots\\
      H_{k,3\tilde n-2}
    \end{bmatrix}.\label{eq:Omegaest}
\end{align}

\subsection*{Inference on $\varphi_k$, $\vartheta_k$ and $\mathcal W$}

This subsection is devoted to the inference of $\varphi_k$ and $\vartheta_k$ defined in \eqref{eq:varphidef}, which corresponds to the parts of $y_k$ generated by the watermark signal and noise respectively. We will further infer the covariance $\mathcal W$ of $\vartheta_k$.

Let us define
\begin{align}
\hat \varphi_k \triangleq \sum_{i=1}^{\tilde n} \hat \varphi_{k,i},\label{eq:varphiest}
\end{align}
with $\hat \varphi_{k,i} = \lambda_{k,i }\hat \varphi_{k-1,i} + \Omega_{k,i} \phi_{k},$ and $\hat \varphi_{-1,i}=0$. As a result, we can estimate $\vartheta_k$ as
\begin{align}
\hat \vartheta_k \triangleq y_k - \hat \varphi_k.\label{eq:varthetaest}
\end{align} The covariance of $\vartheta_k$ can be estimated as
\begin{align}
\mathcal W_k \triangleq \frac{1}{k+1}\sum_{t=0}^k \hat \vartheta_t \hat \vartheta_t^T.\label{eq:West}
\end{align}

\subsection*{Inference on $\mathcal P$, $\mathcal X$, $\mathcal U$ and $g_k$}
Finally we can derive an estimation of the $\mathcal P$ and $\mathcal X$ matrices, which are required to compute the optimal covariance $U$ of the watermark signal, given by
\revised{
\begin{align}
  \mathcal P_k &=\sum_{\tau=0}^\infty\left(\sum_{i=1}^{\tilde n}\lambda_{k,i}^\tau \Omega_{k,i}\right)^T\mathcal W_k^{-1} \left(\sum_{i=1}^{\tilde n}\lambda_{k,i}^\tau \Omega_{k,i}\right) \nonumber\\
  =&\sum_{\tau=0}^\infty\left( \sum_{i=1}^{\tilde n}\sum_{j=1}^{\tilde n} \lambda_{k,i}^\tau\lambda_{k,j}^\tau \Omega_{k,i}^T \mathcal W_k^{-1}\Omega_{k,j}\right)\nonumber\\
  =& \sum_{i=1}^{\tilde n}\sum_{j=1}^{\tilde n}  \left(\sum_{\tau=0}^\infty \left(\lambda_{k,i}\lambda_{k,j}\right)^\tau \right)   \Omega_{k,i}^T \mathcal W_k^{-1}\Omega_{k,j}\nonumber\\
  =& \sum_{i=1}^{\tilde n}\sum_{j=1}^{\tilde n} \frac{1}{1-\lambda_{k,i}\lambda_{k,j}} \Omega_{k,i}^T \mathcal W_k^{-1}\Omega_{k,j}, \label{eq:Pest}
\end{align}
where (28) is derived from the summation of geometric series,}
and
\begin{align}
  \mathcal X_k ={}& \sum_{\tau=0}^\infty\left(\sum_{i=1}^{\tilde n}\lambda_{k,i}^\tau \Omega_{k,i}\right)^TX_{yy} \left(\sum_{i=1}^{\tilde n}\lambda_{k,i}^\tau \Omega_{k,i}\right)\nonumber\\
                  &+ \sum_{i=1}^{\tilde n} \Omega_{k,i}^T X_{y\phi}+ X_{\phi y} \sum_{i=1}^{\tilde n} \Omega_{k,i}+X_{\phi\phi}\nonumber\\
  ={}& \sum_{i=1}^{\tilde n}\sum_{j=1}^{\tilde n} \frac{1}{1-\lambda_{k,i}\lambda_{k,j}} \Omega_{k,i}^T X_{yy} \Omega_{k,j} + \sum_{i=1}^{\tilde n} \Omega_{k,i}^T X_{y\phi}\nonumber\\
                  &+ X_{\phi y} \sum_{i=1}^{\tilde n} \Omega_{k,i}+X_{\phi\phi}.\label{eq:Xest}
\end{align}

The Neyman-Pearson detection statistics $g_k$ can be approximated by
\begin{align}\label{eq:gest}
  \hat g_k = & \left(y_{k}- \hat{\varphi}_{k}\right)^{{T}}  \mathcal W_k^{-1}\left(y_{k}- \hat{\varphi}_{k}\right)- y_{k}^{{T}}  \left(\mathcal W_k+\mathcal U_k\right)^{-1}y_{k},
\end{align}
where
\begin{align}
  \mathcal U_k &=\sum_{\tau=0}^\infty\left(\sum_{i=1}^{\tilde n}\lambda_{k,i}^\tau \Omega_{k,i}\right)U_{k,*} \left(\sum_{i=1}^{\tilde n}\lambda_{k,i}^\tau \Omega_{k,i}\right)^T \nonumber\\
               &= \sum_{i=1}^{\tilde n}\sum_{j=1}^{\tilde n} \frac{1}{1-\lambda_{k,i}\lambda_{k,j}} \Omega_{k,i} U_{k,*}\Omega_{k,j}^T. \label{eq:mathcalUest}
\end{align}
\revised{
\begin{remark}
	For the proposed online algorithm, the system identification and watermark design are tightly coupled. As is commented in Remark 10, the watermarking-based replay attack detection requires the injection of a rank-1 watermarking signal (assuming it is performed optimally). On the other hand, persistency of excitation is required for system identification, i.e., the injected signal needs to be full rank. As a result, we carefully design the covariance of the injected signal to be the ``optimal'' rank-1 covariance matrix on the current knowledge of the system, plus a diminishing factor $(k+1)^{-\beta}I$, and we further prove in this paper that this additional term, although vanishing asymptotically, provides us with enough information to perfectly identify the necessary parameters of the system. 
\end{remark}
\begin{remark}
	It is worth noticing that comparing to an approach with off-line system identification first and then watermarking design later, our approach provides the following advantages:
	\begin{enumerate}
		\item Theoretically speaking, finite-time system identification cannot identify the system parameters precisely and hence the watermarking scheme will not be optimal if the system identification process is stopped.
		\item In practice, the control system could slowly change due to various reasons (e.g., components wear out), so we need to adjust the parameters continuously.
		\item Moreover, for many practical control systems, a model of the system is not available. It is often too expensive to stop the system operation and to perform off-line system identification.
	\end{enumerate}
	We would like to further point out that the classical system identification procedure can be easily integrated to our approach, by providing better estimation of $\mathcal P$ and $\mathcal X$ in the initialization step in Algorithm 1. Hence, classical system identification approach complements our algorithm very well.
\end{remark}
}
\subsection{Algorithm Properties}
The following theorem establishes the convergence of $U_{k,*}$ and $g_k$, the proof of which is reported in the appendix for the sake of legibility.
\begin{theorem}\label{theorem:asconvergencerate}
Assuming that $A$ is strictly stable and Assumption 2 holds. If $0<\beta<1$, then for any $\epsilon > 0$, the following limits hold almost surely:
 \begin{align}
\lim_{k\rightarrow\infty}\frac{U_{k,*}-U_*}{k^{-\gamma+\epsilon}} = 0, \lim_{k\rightarrow\infty}\frac{\hat g_k-g_k}{k^{-\gamma+\epsilon}} = 0,
   \label{eq:asconvergencerate}
 \end{align}
where $\gamma =  (1-\beta)/2 > 0$. In particular, $U_{k,*}$ and $\hat g_k$ almost surely converge to $U_*$ and $g_k$ respectively.
\end{theorem}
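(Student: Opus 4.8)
The plan is to exploit the fact that the quantities produced by Algorithm~\ref{euclid} form a \emph{forward} chain of estimators,
$$H_{k,\tau}\ \longrightarrow\ \{\alpha_{k,i}\}\ \longrightarrow\ \{\lambda_{k,i}\}\ \longrightarrow\ \{\Omega_{k,i}\}\ \longrightarrow\ (\mathcal P_k,\mathcal X_k,\mathcal U_k,\mathcal W_k)\ \longrightarrow\ (U_{k,*},\hat g_k),$$
in which every arrow is a map that is continuously differentiable (indeed rational or analytic) at the true parameter value. Since a locally Lipschitz map preserves an almost-sure convergence rate, the whole theorem reduces to (i) establishing the base rate $H_{k,\tau}-H_\tau\sim O(k^{-\gamma+\epsilon})$ a.s. for the finitely many $\tau\in\{0,\dots,3\tilde n-2\}$, and (ii) checking that each subsequent map is smooth and nondegenerate at its limit, so the rate is transported unchanged down the chain. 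I would therefore fix $\epsilon>0$, prove the base estimate first, and then argue the propagation.

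For the base step, I would first record the deterministic lower bound $U_t\ge \delta(t+1)^{-\beta}I$ coming from the exploration term in~\eqref{eq:Ukdef}, so that $\|U_t^{-1}\|\sim O(t^{\beta})$ \emph{regardless} of the adaptively chosen rank-one part $U_{t,*}$; this both keeps $H_{k,\tau}$ well defined and breaks the apparent feedback between identification and design. Writing $y_t=\sum_s H_s\phi_{t-s}+\vartheta_t$ and $\mathbb E[\phi_{t-s}\phi_{t-\tau}^T]=\delta_{s\tau}U_{t-\tau}$, one checks that $\mathbb E[y_t\phi_{t-\tau}^TU_{t-\tau}^{-1}]=H_\tau$, so $H_{k,\tau}$ is an \emph{unbiased} running average whose error $\tfrac{1}{k-\tau+1}\sum_t e_t$ splits into a diagonal fluctuation $H_\tau(\phi_{t-\tau}\phi_{t-\tau}^T-U_{t-\tau})U_{t-\tau}^{-1}$, the off-diagonal watermark cross terms ($s\neq\tau$), and the noise--watermark cross term $\vartheta_t\phi_{t-\tau}^TU_{t-\tau}^{-1}$. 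With respect to the filtration $\{\mathcal G_m\}$ that discloses the entire process and measurement noise first and then reveals the watermark innovations $\zeta_0,\zeta_1,\dots$ one at a time, $U_t$ is $\mathcal G_{t-1}$-measurable and each piece, after reindexing by the latest innovation index it contains, is a martingale difference, since $\zeta_m$ is mean zero and independent of $\mathcal G_{m-1}$ while $\vartheta_t$ is noise-measurable. Using $\phi_j=U_j^{1/2}\zeta_j$, every piece carries a factor $U_{\cdot}^{-1/2}$, so its conditional second moment is of order $\|U_{\cdot}^{-1}\|\sim t^{\beta}$. A martingale strong law (Chow-type) with normalizer $b_k=k^{(1+\beta)/2+\epsilon}$—for which $\sum_t \mathbb E\|e_t\|^2/b_t^2<\infty$ because $\sum_t t^{\beta}/t^{1+\beta+2\epsilon}<\infty$—then gives $\sum_{t}e_t=o(k^{(1+\beta)/2+\epsilon})$ a.s., and dividing by $k-\tau+1$ yields exactly $H_{k,\tau}-H_\tau\sim O(k^{-\gamma+\epsilon})$ with $\gamma=(1-\beta)/2$. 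The exponent thus arises from balancing the $O(t^{\beta})$ inflation of the noise by $U^{-1}$ against the $1/k$ averaging.

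For the propagation I would verify nondegeneracy of each limit so that the maps are differentiable. Observability, controllability, and distinctness of the $\lambda_i$ make the limiting $\Xi$ nonsingular, so the least-squares coefficients of~\eqref{eq:optcharpoly} satisfy $\alpha_{k,i}-\alpha_i\sim O(k^{-\gamma+\epsilon})$; the roots of a polynomial with \emph{simple} roots depend analytically on its coefficients, giving $\lambda_{k,i}-\lambda_i\sim O(k^{-\gamma+\epsilon})$ once labels are matched; and the Vandermonde-like $V_k$ converges to a full-column-rank matrix, so its pseudo-inverse is continuous and $\Omega_{k,i}-\Omega_i\sim O(k^{-\gamma+\epsilon})$ by~\eqref{eq:Omegaest}. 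The estimate $\mathcal W_k$ in~\eqref{eq:West} I would handle separately: the pure-noise empirical covariance $\tfrac1{k+1}\sum\vartheta_t\vartheta_t^T$ converges to $\mathcal W$ at the faster rate $O(k^{-1/2+\epsilon})$ by a SLLN for the asymptotically stationary, mixing process $\vartheta_t$, while the contamination from $\hat\varphi_t-\varphi_t$ inherits $O(k^{-\gamma+\epsilon})$; as $\gamma<\tfrac12$ the slower rate dominates. Strict stability gives $|\lambda_i\lambda_j|<1$, so the denominators $1-\lambda_{k,i}\lambda_{k,j}$ in~\eqref{eq:Pest}--\eqref{eq:mathcalUest} stay bounded away from zero and $\mathcal P_k,\mathcal X_k,\mathcal U_k$ are smooth in the preceding estimates, carrying the same rate. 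Finally $U_{k,*}=z_kz_k^T$ is the top generalized eigenpair of $(\mathcal P_{k-1},\mathcal X_{k-1})$ as in Theorem~\ref{theorem:rankone}, and $\hat g_k$ in~\eqref{eq:gest} is a rational function of $(\hat\varphi_k,\mathcal W_k,\mathcal U_k,y_k)$, so both inherit $O(k^{-\gamma+\epsilon})$.

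I expect two genuinely delicate points. The main obstacle is the base martingale estimate: the excitation supplied by the watermark \emph{vanishes} at rate $t^{-\beta}$ while we invert it, so the classical persistent-excitation machinery does not apply, and the a.s. rate must be extracted from a martingale strong law with a growing predictable quadratic variation under an adaptively chosen $U_t$—it is precisely the deterministic bound $U_t\ge\delta(t+1)^{-\beta}I$ and the noise-first filtration $\{\mathcal G_m\}$ that rescue the argument. The second is transporting the rate to $U_{k,*}$: an eigenvector is only a Lipschitz function of the matrix when the associated eigenvalue is simple, which is exactly why part~2 of Assumption~\ref{as2} is imposed, so I would invoke analytic perturbation theory for the simple top eigenvalue of $\mathcal X^{-1}\mathcal P$ to obtain $U_{k,*}-U_*\sim O(k^{-\gamma+\epsilon})$, and likewise absorb the slow (poly-logarithmic) a.s. growth of $\|y_k\|$ into an arbitrarily small enlargement of $\epsilon$ when bounding $\hat g_k-g_k$.
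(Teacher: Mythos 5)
Your proposal is correct in substance and follows the same two-stage architecture as the paper's proof: (i) establish the base rate $H_{k,\tau}-H_\tau\sim O(k^{-\gamma+\epsilon})$ a.s.\ by combining the deterministic exploration bound $U_k\geq \delta(k+1)^{-\beta}I$ (the paper's Lemma~\ref{lemma:Ubound}) with a Chow-type martingale strong law (the paper's Lemma~\ref{lemma:MatrixStrongLaw}), the exponent $\gamma=(1-\beta)/2$ arising from exactly the same balance of the $O(k^{\beta})$ inflation caused by $U_k^{-1}$ against the $1/k$ averaging; (ii) transport the rate through the estimation chain using smoothness plus the same nondegeneracy facts ($\Xi$ invertible via controllability/observability, simple roots, full-column-rank Vandermonde, $|\lambda_i\lambda_j|<1$, simple top eigenvalue from Assumption~\ref{as2}). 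Where you genuinely differ is the martingale construction in step (i). The paper keeps the natural filtration $\mathcal F_k$ and handles the serial dependence among the summands $y_t\phi_{t-\tau}^TU_{t-\tau}^{-1}$ (the term at index $t$ is only $\mathcal F_{t}$-measurable, while its conditional mean equals $H_\tau$ given $\mathcal F_{t-\tau-1}$) by subsampling the sum into $\tau+1$ interleaved subsequences, each a martingale with respect to the subsampled filtration, and then writing $H_{k,\tau}-H_\tau$ as a convex combination of these. You instead reorder the filtration---reveal all process/measurement noise at time zero, then the innovations $\zeta_m$ one at a time---and split the error into diagonal, off-diagonal, and noise-cross pieces, each of which becomes a martingale difference after re-indexing by its latest innovation. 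Both devices are valid; yours avoids the subsampling and the fourth-moment bound on $y_k$ that the paper needs (Cauchy--Schwarz with $\sup_k\mathbb E\|y_k\|^4<\infty$), at the cost of verifying that the re-indexed, $s$-summed cross terms still form a square-integrable martingale difference array, which requires $\sum_s\|H_s\\|<\infty$, i.e.\ strict stability, used implicitly in your sketch. Similarly, for $\frac{1}{k+1}\sum_t\vartheta_t\vartheta_t^T-\mathcal W\sim O(k^{-1/2+\epsilon})$ you invoke a mixing SLLN, whereas the paper constructs an explicit martingale-plus-stable-recursion decomposition; either route works.

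One soft spot: your blanket claim that every arrow in the chain is a continuously differentiable map does not literally cover $\hat\varphi_k$, which is produced by the time-varying recursion $\hat\varphi_{k,i}=\lambda_{k,i}\hat\varphi_{k-1,i}+\Omega_{k,i}\phi_k$ and is therefore a functional of the whole history of parameter estimates, not a static function of the current ones. The paper needs a separate contraction argument here: the error dynamics $\varphi_{k+1,i}-\hat\varphi_{k+1,i}=\lambda_{k,i}(\varphi_{k,i}-\hat\varphi_{k,i})+(\lambda_i-\lambda_{k,i})\varphi_{k,i}+(\Omega_i-\Omega_{k,i})\phi_k$ are eventually uniformly contracting because $|\lambda_{k,i}|\rightarrow|\lambda_i|<1$, and this is what lets the $O(k^{-\gamma+\epsilon})$ rate survive the recursion. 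Since $\hat\varphi_k$ enters both $\mathcal W_k$ and $\hat g_k$, this argument must precede your final step; it is a sketch-level omission rather than a wrong turn, and is fixable exactly along these lines.
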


From the definition of $U_k = U_{k,*} + (k+1)^{-\beta}\delta I $, we immediately have the following corollary:
\begin{corollary}
  \label{corollary:convergencerateofU}
  Assuming that $A$ is strictly stable and Assumption 2 holds. If $0<\beta<1$, then for any $\epsilon > 0$, the following limit holds almost surely:
  \begin{align}
    \lim_{k\rightarrow\infty}\frac{U_{k}-U_*}{k^{-\min(\gamma,\beta)+\epsilon}} = 0.\label{eq:asconvergencerate2}
  \end{align}
\end{corollary}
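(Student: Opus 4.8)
The plan is to reduce the corollary directly to Theorem~\ref{theorem:asconvergencerate} by a triangle-inequality argument, exploiting that $U_k$ differs from $U_{k,*}$ only by a deterministic, explicitly decaying term. First I would invoke the definition $U_k = U_{k,*} + (k+1)^{-\beta}\delta I$ to write the decomposition
\begin{align*}
  U_k - U_* = \left(U_{k,*} - U_*\right) + (k+1)^{-\beta}\delta I,
\end{align*}
and then bound the spectral norm via $\|U_k - U_*\| \le \|U_{k,*}-U_*\| + (k+1)^{-\beta}\delta$, using $\|I\|=1$. The first summand is the stochastic ``exploitation'' error that is controlled by Theorem~\ref{theorem:asconvergencerate}, whereas the second is a purely deterministic ``exploration'' term of order $k^{-\beta}$.

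Next I would divide through by the normalizer $k^{-\min(\gamma,\beta)+\epsilon}$ and treat the two summands separately. For the stochastic term, I would factor as
\begin{align*}
  \frac{\|U_{k,*}-U_*\|}{k^{-\min(\gamma,\beta)+\epsilon}} = \frac{\|U_{k,*}-U_*\|}{k^{-\gamma+\epsilon}}\cdot k^{\,\min(\gamma,\beta)-\gamma}.
\end{align*}
By Theorem~\ref{theorem:asconvergencerate} the first factor tends to $0$ almost surely, and since $\min(\gamma,\beta)\le\gamma$ the exponent $\min(\gamma,\beta)-\gamma\le 0$, so the second factor is bounded by $1$ for $k\ge 1$; hence the product vanishes almost surely. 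For the deterministic term I would use $(k+1)^{-\beta}\sim O(k^{-\beta})$ to obtain
\begin{align*}
  \frac{(k+1)^{-\beta}\delta}{k^{-\min(\gamma,\beta)+\epsilon}} \sim O\!\left(k^{\,\min(\gamma,\beta)-\beta-\epsilon}\right),
\end{align*}
where $\min(\gamma,\beta)-\beta\le 0$ together with $\epsilon>0$ forces the exponent to be strictly negative, so this term converges (deterministically, hence surely) to $0$. Summing the two estimates yields the claimed almost-sure limit.

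There is no substantial obstacle here; the only point requiring care is the choice of the common normalizing exponent $\min(\gamma,\beta)$, which must simultaneously dominate both the slower stochastic rate $\gamma$ and the exploration rate $\beta$. In particular, the strictly positive slack $\epsilon$ must be kept in the deterministic estimate to turn the borderline exponent $\min(\gamma,\beta)-\beta$ (which equals $0$ when $\beta\le\gamma$) into a strictly negative one, thereby guaranteeing decay rather than mere boundedness. Finally, since one contribution is almost sure and the other is deterministic, their combination needs no additional probabilistic argument: the full-measure event supplied by Theorem~\ref{theorem:asconvergencerate} already carries the conclusion.
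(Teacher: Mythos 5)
Your proof is correct and is exactly the argument the paper intends: the paper gives no explicit proof, stating only that the corollary follows ``immediately'' from $U_k = U_{k,*} + (k+1)^{-\beta}\delta I$ and Theorem~\ref{theorem:asconvergencerate}, and your triangle-inequality decomposition with the two exponent checks ($\min(\gamma,\beta)-\gamma\le 0$ for the stochastic term, $\min(\gamma,\beta)-\beta-\epsilon<0$ for the deterministic term) is precisely that immediate deduction written out carefully.
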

\begin{remark}
  It is worth noticing that \eqref{eq:asconvergencerate} implies that both $U_{k,*}-U_*$ and $\hat g_k-g_k$ are of the order $O(k^{-\gamma+\epsilon})$ as $k$ goes to infinity. Hence, the convergence rate $\gamma$ is maximized when $\beta \rightarrow 0^+$, which corresponds to the case where the exploration term $(k+1)^{-\beta}\delta I$ in $U_k$ stays constant. However, although this will maximize the performance for the inference algorithm, the covariance $U_k$ of the watermark signal $\phi_k$ will not converge to the true optimal $U_*$. {In order to achieve ``fastest'' convergence rate of $U_k$, we need to choose the decay rate for the exploration term to be $\beta = 1/3=\argmax_\beta(\gamma,\beta)$}.

  We would also like to point out that Theorem~\ref{theorem:asconvergencerate} only provides an upper bound for the almost sure convergence rate and we plan to investigate the exact convergence rate in our future work. It is also interesting to see if faster convergence can be achieved by using more advanced system identification techniques.
\end{remark}

\section{Simulation Result}
\label{sec:simulation}
In this section, the performance of the proposed algorithm is evaluated. We will apply the proposed online ``learning'' approach to a numerical example and an industrial process, Tennessee Eastman Process (TEP).

\subsection{A Numerical Example}
First we choose $m=3, n=5, p=2$ and $A,\,B,\,C$ are all randomly generated, with $A$ being stable. It is assumed that $X$ in \eqref{eq3}, the covariance matrices $Q$ and $R$ are all identity matrices with proper dimensions. We assume that $\delta$ in \eqref{eq:opt2} is equal to $10\%$ of optimal LQG cost $J_0$. Fig. \ref{fig:errustar} shows relative error $\|U_{k,*}-U_*\|_F/\|U_*\|_F$ of the estimated $U_{k,*}$ v.s. time $k$ for different $\beta$s.

\begin{figure}[ht]
	\input{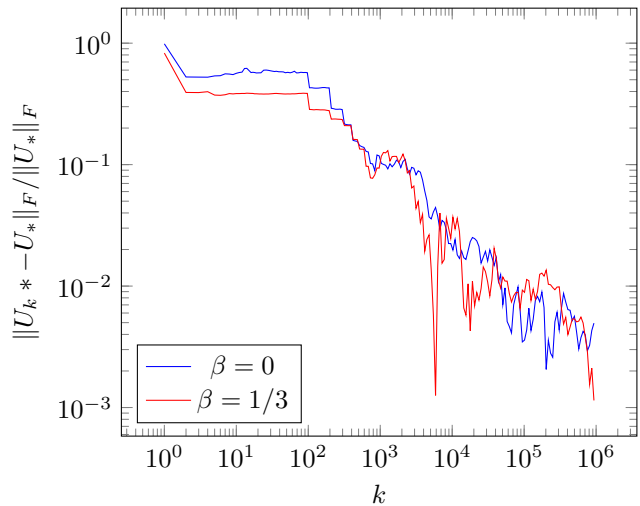} 
	\caption{\label{fig:errustar} Relative error of $U_{k,*}$ for different $\beta$. The red solid line denotes the relative error of $U_{k,*}$ when $\beta = 0$. The blue solid line is the relative error of $U_{k,*}$ when $\beta = 1/3$.}
\end{figure}

From Fig~\ref{fig:errustar}, one can see that the estimator error converges to $0$ as time $k$ goes to infinity and the convergence approximately follows a power law. From Theorem~\ref{theorem:asconvergencerate}, we know that $U_{k,*}-U_*\sim O(k^{-\gamma+\epsilon})$, where $\gamma = (1-\beta)/2$. However, from Fig~\ref{fig:errustar}, it seems that the convergence speed of the error for different $\beta$ is comparable. Notice that Theorem~\ref{theorem:asconvergencerate} only provides an upper bound for the convergence rate. As a result, it would be interesting to quantify the exact impact of $\beta$ on the convergence rate, which we shall leave as a future research direction.

Now we consider the detection performance of our online watermark signal design, after an initial inference period, where no attack is present. It is assumed that the attacker records the sensor readings from time $10^4+1$ to $10^4+100$ and replays them to the system from time $10^4+101$ to $10^4+200$. Fig~\ref{fig:learningvsnoraml} shows the trajectory of the Neyman-Pearson statistic $g_k$ and our estimate $\hat g_k$ of $g_k$ for one simulation. Notice that $\hat g_k$ can track $g_k$ with high accuracy. Furthermore, both $\hat g_k$ and $g_k$ are significantly larger when the system is under replay attack (after time $10^4+101$). Hence one can conclude that even without parameter knowledge, we can successfully estimate $g_k$ and detect the presence of the replay attack.

\begin{figure}[ht]
  \input{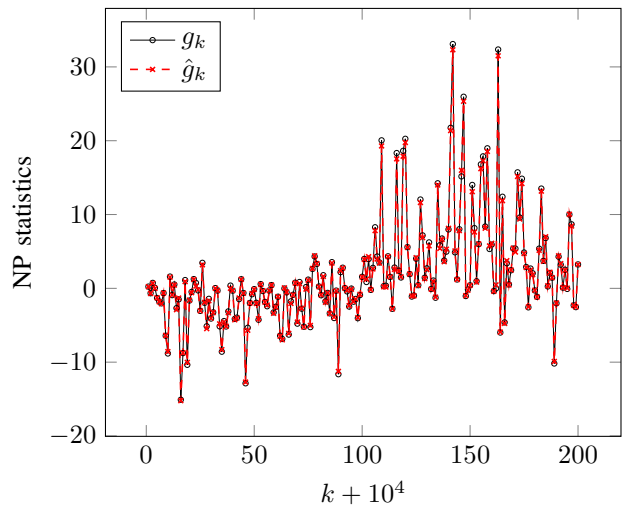}
  \caption{\label{fig:learningvsnoraml} The detection statistics v.s. time. The black solid line with circle markers is the true Neyman-Pearson statistics $g_k$, assuming full system knowledge. The red dashed line with cross markers denotes our estimated $\hat g_k$.}
\end{figure}

\subsection{TEP Example}
Tennessee Eastman Process (TEP) is a commonly used process control system proposed by Downs and Vogel in~\cite{downs1993plant}. In this simulation, we adopt a simplified version of TEP from~\cite{ricker1993model}, as follows:
\begin{align*}
\dot{x} &= Ax+ Bu,\\
y &= Cx,
\end{align*}
where $A,B$ and $C$ are constant matrices~\footnote{For more details about this dynamic model, please refer to Appendix \Rmnum{1} in~\cite{ricker1993model}.}.  

This system simulates a MIMO system of order $n = 8$ with $p = 4$ inputs and $m = 10$ outputs. We discretize the system using the control system toolbox in MATLAB, by selecting a sample time of $0.6s$. Again, we choose $X$ in \eqref{eq3}, the covariance matrices $Q$ and $R$ to be identity matrices with proper dimensions. We assume that $\delta$ in \eqref{eq:opt2} is equal to $5\%$ of $J_0$, and $\beta = 1/3$. In this simulation, we assume that we do not know the dimension of the state space, which is $8$, and instead we underestimate it by assuming that $A$ only has $\tilde n=5$ distinct eigenvalues.

Fig. \ref{fig:errustarte} illustrates the relative error $\|U_{k,*}-U_*\|_F/\|U_*\|_F$ after running the system for roughly 1 week ($10^6\times 0.6s\approx 0.992$week). Fig~\ref{fig:learningvsnoraml_te} illustrates the NP statistics $g_k$ and the estimated NP statistics $\hat g_k$, assuming that the adversary collects the measurement from $10^6+1$ to $10^6+100$ and replays them to the system from time $10^6+101$ to $10^6+200$. One can see that although we underestimate the dimensions of the system, our algorithm can still achieve a high accuracy.

\begin{figure}[ht]
	\input{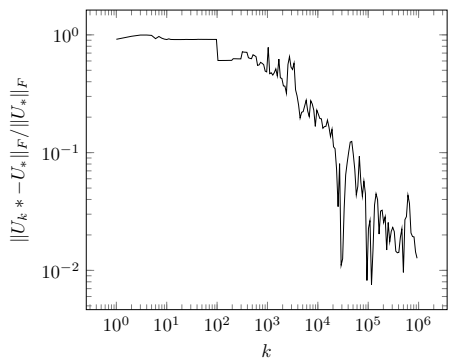} 
	\caption{Relative error of $U_{k,*}$.}
	\label{fig:errustarte}
\end{figure}

\begin{figure}[ht]
	\input{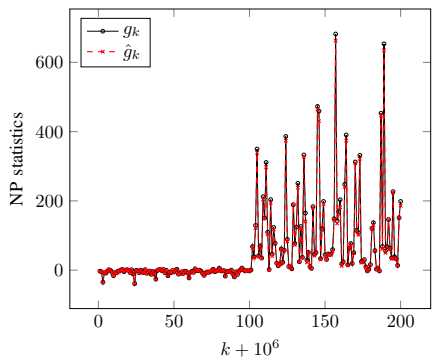} 
  \caption{The detection statistics v.s. time. The black solid line with circle markers is the true Neyman-Pearson statistics $g_k$, assuming full system knowledge. The red dashed line with cross markers denotes our estimated $\hat g_k$.}
\label{fig:learningvsnoraml_te} 
\end{figure}

\section{Conclusion}
\label{sec:conclusion}
In this paper, an algorithm that can simultaneously generate the watermarking signal and infer the system parameters is proposed. We prove that our algorithm converges to the optimal one and characterize an upper bound for the almost surely convergence rate. For future works, we would like to quantify the exact convergence rate, as well as exploring other system identification methods and prove their convergence. \revised{It is of interest to study secure control in other cases, such as batach-operating process. We are also interested in adversarial learning when the sensor data is compromised.}

\appendices
\section{Proof of Theorem~\ref{theorem:asconvergencerate}}

The whole appendix is devoted to proving Theorem~\ref{theorem:asconvergencerate}. We shall present several preliminary results first and then proceed with the proof of Theorem~\ref{theorem:asconvergencerate}.
\subsection*{Preliminary Results}
 To simplify notations, for a random variable (vector, or matrices) $x_k$, we denote that $x_k\sim \mathcal C(\alpha)$ if for all $\epsilon > 0$, we have that $x_k \sim O(k^{\alpha+\epsilon})$, i.e.,
\begin{align*}
  \lim_{k\rightarrow \infty} \frac{\|x_k\|}{k^{\alpha + \epsilon}}\asequal 0.
\end{align*}

Notice that $x_k \sim O(k^\alpha)$ implies that $x_k\sim \mathcal C(\alpha)$, but the reverse is not  necessarily true\footnote{To see a counterexample, $\log k \sim \mathcal C(0)$, but $\log k$ is not of the order $O(k^0)$.}. The following lemma establishes some basic properties of $\mathcal C(\alpha)$ functions:

\begin{lemma}
  \label{lemma:functionclass}
  Assuming that $x_k\sim \mathcal C(\alpha)$ and $y_k\sim \mathcal C(\beta)$, with $\alpha \geq \beta$, then the following statements hold:
  \begin{enumerate}
  \item $x_k + y_k \sim \mathcal C(\alpha)$, $x_k\times y_k \sim \mathcal C(\alpha+\beta)$, and $(x_k+\Delta x_k)(y_k+\Delta y_k)-x_ky_k \sim \mathcal C(\max\{\alpha\beta',\alpha'\beta,\alpha'\beta'\})$, suppose that $\Delta x_k\sim \mathcal C(\alpha')$ and $\Delta y_k\sim\mathcal C(\beta')$.
  \item $\sum_{t=0}^k x_t \sim  \mathcal C(\alpha+1)$.
  \item Suppose $f$ is differentiable at $0$ and $\alpha < 0$, then
\begin{align*}
    f(x_k)-f(0)\sim \mathcal C(\alpha).
\end{align*}
  \item $s_k \sim   \mathcal C(\alpha)$, with
    \begin{align*}
s_k = \rho s_{k-1}+x_k,\,s_{-1} = 0,
    \end{align*}
    where $|\rho|<1$.
  \item  Assume that $X_k$ is a matrix and $X_k \sim \mathcal C(\alpha)$. Let
    \begin{align*}
      S_k = A S_{k-1}B+X_k,\,S_{-1} = 0,
    \end{align*}
    where $A, B$ are matrices of proper dimensions. Then $S_k\sim \mathcal C(\alpha)$ if $B^T\otimes A$ is strictly stable.
    \item  $\zeta_k\sim \mathcal C(0)$, where $\{\zeta_k\}$ is a sequence of i.i.d. Gaussian random variables, i.e., $\zeta_k\sim \mathcal N(\bar \mu,Z)$. 
  \end{enumerate}
\end{lemma}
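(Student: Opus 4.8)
The plan is to exploit the fact that, by definition, the statement $x_k \sim \mathcal C(\alpha)$ is an \emph{almost sure} assertion about polynomial growth: on the probability-one event where $\|x_k\|/k^{\alpha+\epsilon}\to 0$ for every rational $\epsilon>0$, each claim except the last becomes a purely deterministic growth-rate estimate. I would therefore fix such a sample path, argue pathwise, and intersect the countably many probability-one events at the end. Items in 1 are then immediate. Since $\alpha\ge\beta$, for large $k$ we have $k^{\beta+\epsilon}\le k^{\alpha+\epsilon}$, so $\|y_k\|/k^{\alpha+\epsilon}\to0$ as well, and the triangle inequality gives $x_k+y_k\sim\mathcal C(\alpha)$. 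For the product, I would split $\epsilon=\epsilon_1+\epsilon_2$ and bound $\|x_ky_k\|\le\|x_k\|\|y_k\|=(\|x_k\|/k^{\alpha+\epsilon_1})(\|y_k\|/k^{\beta+\epsilon_2})\,k^{\alpha+\beta+\epsilon}$; the mixed-difference estimate follows by expanding $(x_k+\Delta x_k)(y_k+\Delta y_k)-x_ky_k=x_k\Delta y_k+\Delta x_k y_k+\Delta x_k\Delta y_k$ and applying the product and sum rules termwise.

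For item 2 I would compare the sum with an integral: on the good event there is a random constant $M$ with $\|x_t\|\le M t^{\alpha+\epsilon/2}$, whence $\|\sum_{t=0}^k x_t\|\le \|x_0\|+M\sum_{t=1}^k t^{\alpha+\epsilon/2}$, and $\sum_{t=1}^k t^{\gamma}\sim O(k^{\gamma+1})$ whenever $\gamma>-1$, so the partial sums are $O(k^{\alpha+1+\epsilon/2})\subset\mathcal C(\alpha+1)$ in the relevant regime $\alpha\ge -1$ (the boundary $\gamma=-1$, which never occurs for $\epsilon>0$ here, would only add a logarithmic factor absorbed by the $\epsilon/2$ slack). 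For item 3, since $\alpha<0$ I first note $x_k\to 0$ almost surely; differentiability at $0$ gives $f(x_k)-f(0)=Df(0)x_k+r(x_k)$ with $\|r(x_k)\|=o(\|x_k\|)$, so eventually $\|r(x_k)\|\le\|x_k\|$, and both the linear term and the remainder are dominated by $\|x_k\|\sim\mathcal C(\alpha)$.

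The crux is items 4 and 5, which both amount to showing that passing a $\mathcal C(\alpha)$ sequence through a strictly stable linear filter preserves the class. For item 5 I would vectorize, $\mathrm{vec}(S_k)=(B^T\otimes A)\,\mathrm{vec}(S_{k-1})+\mathrm{vec}(X_k)$, reducing everything to a recursion driven by the single matrix $\rho\triangleq B^T\otimes A$; strict stability yields $\|\rho^{\,j}\|\le C\mu^{\,j}$ for some $C>0$ and $\mu\in(0,1)$ (via Gelfand's formula or the Jordan form), which is exactly the scalar setting of item 4 with $|\rho|$ replaced by $\mu$. Unrolling gives $\mathrm{vec}(S_k)=\sum_{t=0}^k \rho^{\,k-t}\mathrm{vec}(X_t)$, and the key estimate is
\begin{align*}
  \sum_{t=0}^k \mu^{\,k-t}\, t^{\alpha+\epsilon/2}\sim O\!\left(k^{\alpha+\epsilon/2}\right),
\end{align*}
which I would prove with the substitution $j=k-t$ and a split at $j=k/2$: for $j\le k/2$ the factor $(k-j)^{\alpha+\epsilon/2}$ is comparable to $k^{\alpha+\epsilon/2}$ and $\sum_j\mu^{\,j}$ converges, while for $j>k/2$ the weight $\mu^{\,j}\le\mu^{k/2}$ decays geometrically and dominates any polynomial (the lone problematic term $t=0$, when $\alpha+\epsilon/2<0$, contributes only $\mu^{\,k}\|x_0\|$). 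Dividing by $k^{\alpha+\epsilon}$ and letting $k\to\infty$ gives $S_k\sim\mathcal C(\alpha)$, and item 4 is the scalar special case.

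Finally, item 6 is the only genuinely probabilistic step and relies on Borel--Cantelli. Because the $\zeta_k\sim\mathcal N(\bar\mu,Z)$ are i.i.d., $\|\zeta_k\|$ has a fixed sub-Gaussian tail, $P(\|\zeta_k\|>t)\le C e^{-ct^2}$ for large $t$. Hence for any $\epsilon>0$, $\sum_k P(\|\zeta_k\|>k^{\epsilon/2})\le C\sum_k e^{-ck^{\epsilon}}<\infty$, so almost surely $\|\zeta_k\|\le k^{\epsilon/2}$ for all large $k$, giving $\|\zeta_k\|/k^{\epsilon}\le k^{-\epsilon/2}\to 0$; since $\epsilon$ is arbitrary, $\zeta_k\sim\mathcal C(0)$. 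The main obstacle I anticipate is the convolution estimate underlying items 4 and 5, specifically the regime $\alpha<0$ (and, in item 2, the boundary $\alpha=-1$), where the naive bound $t^{\alpha+\epsilon/2}\le k^{\alpha+\epsilon/2}$ fails and one must carefully separate the geometrically-weighted tail from the polynomially-dominated head.
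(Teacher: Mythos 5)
Your proposal is correct and follows essentially the same route as the paper: pathwise polynomial-growth estimates for items 1--5 (treating the first three as elementary, unrolling the stable recursions so that geometric decay absorbs polynomial growth, and vectorizing item 5 down to item 4 via $B^T\otimes A$), together with a Gaussian tail bound plus Borel--Cantelli for item 6. Your splitting of the convolution sum at $j=k/2$ is in fact more careful than the paper's one-line bound, which tacitly assumes $\alpha+\epsilon\ge 0$; this is a refinement of the same argument rather than a different approach.
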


\begin{proof}
  The first three statements can be trivially proved and hence we only focus on the last three statements.
  \begin{enumerate}
\setcounter{enumi}{3} 
  \item Since $x_k\sim \mathcal C(\alpha)$, it is easy to see that for any $\epsilon > 0$,
  \begin{align*}
   \sup_k \frac{|x_k|}{k^{\alpha+\epsilon}} = M_a(\epsilon) < \infty. \quad a.s.
  \end{align*}
  As a result,
  \begin{align*}
\frac{| s_k|}{k^{\alpha+2\epsilon}} \leq \frac{1}{k^\epsilon} \sum_{i=1}^k\left|\rho^{i-1}\right|  \left|\frac{x_{k-i}}{k^{\alpha+\epsilon}}\right| \leq  \frac{1}{k^\epsilon} \frac{M_a(\epsilon)}{1-|\rho|},
  \end{align*}
  which almost surely converges to $0$ as $k$ goes to $\infty$. As a result, $s_k\sim \mathcal C(\alpha)$.
  \item For the last statement, notice that
\begin{align*}
  vec(S_k) = \left(B^T\otimes A\right)  vec( S_{k-1})+vec(X_k).
\end{align*}
Therefore, the argument that $S_k\sim\mathcal C(\alpha)$ follows the same line of proof as the fourth statement.
\item We only need to prove for the case where $\zeta_k$ follows the standard normal distribution. The high dimensional case can then be proved by checking each entry of $\zeta_k$ with proper scaling and shifting. For any $\epsilon,\phi > 0$, we have
  \begin{align*}
    P\left(\frac{|\zeta_k|}{k^\epsilon} >\phi \right) &=\sqrt{\frac{2}{\pi}}\int_{\phi k^\epsilon} ^\infty \exp(-x^2/2)dx.
  \end{align*}
  Suppose that $k$ is large enough, such that $\phi k^\epsilon > 1$, then we have
  \begin{align*}
    \int_{\phi k^\epsilon} ^\infty \exp(-x^2/2)dx &\leq \int_{\phi k^\epsilon} ^\infty \exp(-x^2/2)\times xdx\\
                                                  &= \exp(-\phi^2 k^{2\epsilon} /2),
  \end{align*}
  and
  \begin{align*}
    \lim_{k\rightarrow\infty}k^2 \exp(-\phi^2 k^{2\epsilon} /2) = 
    \lim_{x\rightarrow\infty}\left(\frac{2x}{\phi^2}\right)^{1/\epsilon} \exp(-x) = 0.
  \end{align*}
  As a result, using direct comparison test for infinite series, we can prove that
  \begin{align*}
    \sum_{k=1}^\infty  P\left(\frac{|\zeta_k|}{k^\epsilon} >\phi \right) < \infty,
  \end{align*}
  which further implies (by Borel-Cantelli Lemma), that
  \begin{align*}
    \limsup_{k\rightarrow\infty} \frac{|\zeta_k|}{k^\epsilon} \leq \phi,\,a.s.
  \end{align*}
  Since $\phi$ can be arbitrarily small, $\zeta_k/k^\epsilon\rightarrow 0$ almost surely, which finishes the proof.
  \end{enumerate}
\end{proof}

 Let $\{\mathcal F_k\}$ be a filtration of sigma algebras and $\{M_k\}$ be a matrix-valued stochastic process that is adapted to the filtration $\{\mathcal F_k\}$, we call $\{M_k\}$ a (matrix-valued) matingale (with respect to the filtration $\{\mathcal F_k\}$) if the following equality holds for all $t$:
\begin{align*}
  \mathbb E\left(M_{k+1}|\mathcal F_k\right) =  M_k.
\end{align*}

For the rest of the paper, we shall assume that the filtration $\mathcal F_k$ is the $\sigma$-algebra generated by the random variables $\{x_{-1}, \phi_0,\cdots, \phi_{k}, w_0,\cdots, w_{k},v_0,\cdots,v_k\}$. Now we have the following lemma to establish a strong law for matrix-valued martingale:
\begin{lemma}
  \label{lemma:MatrixStrongLaw}
  If $M_{k}=\varPhi_0+\varPhi_1+\cdots+\varPhi_k$ is a matrix-valued martingale such that 
  \begin{align*}
    \mathbb E\left\|\varPhi_k\right\|^2 \sim \mathcal C(\beta),
  \end{align*}
  where $0 \leq \beta < 1$, then $M_k/k$ converges to $0$ almost surely. Furthermore,
  \begin{align*}
    \frac{M_k}{k}\sim \mathcal C\left(\frac{\beta-1}{2}\right).
  \end{align*}
\end{lemma}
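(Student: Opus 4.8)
The plan is to reduce the matrix-valued statement to a scalar one and then invoke a strong law for scalar $L^2$-bounded martingales with growing variance, of the type established by Chow. First I would reduce to the scalar case by observing that $\|M_k\|$ is controlled entrywise: writing $M_k = (M_k^{(pq)})$, it suffices to show that each scalar process $M_k^{(pq)} = \sum_{t=0}^k \varPhi_t^{(pq)}$, which is itself a scalar martingale with respect to the same filtration $\{\mathcal F_k\}$, satisfies $M_k^{(pq)}/k \asrightarrow 0$ and $M_k^{(pq)}/k \sim \mathcal C((\beta-1)/2)$. Since $|\varPhi_t^{(pq)}| \leq \|\varPhi_t\|$, the hypothesis $\mathbb E\|\varPhi_k\|^2 \sim \mathcal C(\beta)$ forces $\mathbb E|\varPhi_k^{(pq)}|^2 \sim \mathcal C(\beta)$ as well, so each entry inherits the required moment bound. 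Because the matrix has a fixed finite number of entries, controlling the spectral norm reduces to controlling each entry, and a maximum of finitely many $\mathcal C((\beta-1)/2)$ quantities is again $\mathcal C((\beta-1)/2)$ by part~(1) of Lemma~\ref{lemma:functionclass}.

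Next I would analyze the scalar martingale. Let $s_k = M_k^{(pq)}$ with increments $\xi_t = \varPhi_t^{(pq)}$ satisfying $\mathbb E(\xi_t \mid \mathcal F_{t-1}) = 0$ and $\mathbb E\,\xi_t^2 \sim \mathcal C(\beta)$. By the orthogonality of martingale increments, $\mathbb E\,s_k^2 = \sum_{t=0}^k \mathbb E\,\xi_t^2$, and using part~(2) of Lemma~\ref{lemma:functionclass} this partial sum is $\sim \mathcal C(\beta+1)$, i.e.\ for every $\epsilon>0$ it is $O(k^{\beta+1+\epsilon})$. To upgrade this second-moment growth to an almost-sure rate, I would apply a Chow-type strong law for martingales: if $\sum_k \mathbb E(\xi_k^2)/b_k^2 < \infty$ for a positive nondecreasing sequence $b_k \uparrow \infty$, then $s_k/b_k \to 0$ almost surely. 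I would take $b_k = k^{(\beta+1)/2 + \epsilon}$ for arbitrarily small $\epsilon>0$; then $\mathbb E(\xi_k^2)/b_k^2 \sim O(k^{\beta - (\beta+1) - 2\epsilon + \epsilon'}) = O(k^{-1-2\epsilon+\epsilon'})$, which is summable once $\epsilon'<2\epsilon$, so $s_k/b_k \asrightarrow 0$. This gives $s_k \sim \mathcal C((\beta+1)/2)$.

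Finally I would convert the bound on $s_k$ into the claimed bound on $s_k/k$. Dividing by $k$ and using part~(1) of Lemma~\ref{lemma:functionclass} (that $\mathcal C(\cdot)$ is closed under multiplication, here by the deterministic factor $k^{-1}\sim \mathcal C(-1)$), we obtain
\begin{align*}
  \frac{s_k}{k}\sim \mathcal C\left(\frac{\beta+1}{2}-1\right) = \mathcal C\left(\frac{\beta-1}{2}\right).
\end{align*}
Since $\beta < 1$ gives $(\beta-1)/2 < 0$, this in particular yields $s_k/k \asrightarrow 0$, and reassembling the finitely many entries gives $M_k/k \asrightarrow 0$ together with $M_k/k \sim \mathcal C((\beta-1)/2)$. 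The main obstacle I anticipate is the precise statement and justification of the Chow-type strong law under a variance growing like $\mathcal C(\beta+1)$ rather than being uniformly bounded; the standard $L^2$ martingale convergence theorem does not directly apply because $s_k$ itself need not converge, so I must invoke the normalized version with the weight sequence $b_k$ and verify the Kronecker-lemma summability condition carefully, paying attention to the fact that the $\mathcal C$ notation hides arbitrarily small polynomial factors that must be absorbed into the choice of $\epsilon$ and $\epsilon'$.
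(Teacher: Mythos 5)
Your proposal is correct and follows essentially the same route as the paper's proof: reduce to each scalar entry via $|\varPhi_k^{(pq)}|\leq\|\varPhi_k\|$, apply Chow's strong law for martingales with the weight sequence $b_k=k^{(\beta+1)/2+\epsilon}$, verify the summability condition $\sum_k \mathbb E(\xi_k^2)/b_k^2<\infty$, and reassemble the finitely many entries. The only cosmetic difference is your preliminary computation of $\mathbb E\,s_k^2$ via orthogonality of increments, which the paper omits as it is not needed once Chow's lemma is invoked.
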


\begin{proof}
  Let $\varPhi_{k,ij}$ ($M_{k,ij}$) be the $(i,j)$-th entry of the matrix $\varPhi_k$ ($M_{k}$). It is easy to prove that $\{M_{k,ij}\}$ is a scalar martingale (adapted the same filtration $\{\mathcal F_k\}$) and since\footnote{This is due the fact that $\|A\| = \sup_{\|u\|=\|v\|=1} |u^TAv| \geq |e_i^TAe_j|$.}
  \begin{align*}
    \varPhi_{k,ij}^2\leq \|\varPhi_k\|^2,
  \end{align*}
  we have that $\mathbb E \varPhi_{k,ij}^2 \sim \mathcal C(\beta)$. For simplicity, let us define $\kappa = (\beta+1)/2$. One can easily verify that for any $\epsilon > 0$ and large enough $i$, the following inequalities hold:
  \begin{align*}
    i^{1-1} \left(k^{-\kappa -\epsilon}\right)^{2-2}  = 1 ,
  \end{align*}
  and
  \begin{align*}
    \sum_{k=i}^\infty&\left(k^{-\kappa -\epsilon}\right)^2 k^{-1} \leq \int_{i-1}^\infty x^{-2\kappa-2\epsilon-1}dx\\
                                                                 & = \frac{1}{2\kappa+2\epsilon} \left(i-1\right)^{-2\kappa-2\epsilon} \leq \frac{1}{\kappa} \left(i^{-\kappa-\epsilon}\right)^2.
  \end{align*}
  The last inequality is true since $\epsilon > 0$ and for large enough $i$, $(i-1)/i \rightarrow 1$.

  Finally, one can prove the following equality
  \begin{align*}
    \left( k^{-\kappa-\epsilon}\right)^2 \mathbb E \varPhi_{k,ij}^2 
    &=   k^{-\beta -2\epsilon-1}\mathbb E \varPhi_{k,ij}^2 \\
    & =   k^{-1-\epsilon}\frac{\mathbb E \varPhi_{k,ij}^2 }{k^{\beta+\epsilon}} \sim O(k^{-1-\epsilon}),
  \end{align*}
  which implies that
  \begin{align*}
    \sum_{k=1}^\infty\left( k^{-\kappa-\epsilon}\right)^2 \mathbb E \varPhi_{k,ij}^2 < \infty.
  \end{align*}

  As a result, by Lemma~1 in \cite{Chow1967}, we can deduce that
  \begin{align}
    \lim_{k\rightarrow\infty} \frac{M_{k,ij}/k}{k^{\kappa-1+\epsilon}} = \lim_{k\rightarrow\infty}k^{-\kappa-\epsilon} M_{k,ij} \asequal 0.
    \label{eq:entryconvergence}
  \end{align}
  Notice that \eqref{eq:entryconvergence} is true for all entries of the matrix $M_k$. Therefore, $M_k/k\sim\mathcal C(\kappa - 1)$, with $\kappa - 1 = (\beta +1)/2 - 1 = (\beta - 1)/2$. Since $\beta < 1$, $M_k/k$ converges to 0 almost surely.
\end{proof}

We are now ready to prove Theorem~\ref{theorem:asconvergencerate}, which requires several intermediate steps. 

\subsection*{Boundedness of $U_k$}

\begin{lemma}
  $U_k$ is upper and lower bounded by:
  \begin{align}
  \delta (k+1)^{-\beta}I\leq U_k \leq \delta\left( \left(X_{\phi\phi}-X_{\phi y}X_{yy}^{-1}X_{y\phi}\right)^{-1} + I\right).
    \label{eq:Ubound}
  \end{align}
  \label{lemma:Ubound}
\end{lemma}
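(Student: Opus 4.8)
The plan is to establish the two bounds in \eqref{eq:Ubound} separately, exploiting the decomposition $U_k = U_{k,*} + \delta(k+1)^{-\beta}I$ from \eqref{eq:Ukdef}. Since $U_{k,*}\geq 0$ by construction (it is the argmax over $U\geq 0$ in \eqref{eq:learningPX}), the lower bound $U_k \geq \delta(k+1)^{-\beta}I$ is immediate: the added ``exploration'' term is exactly $\delta(k+1)^{-\beta}I$, and adding a positive semidefinite matrix to it only increases it in the Loewner order. This handles the left-hand inequality with essentially no work.

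For the upper bound, the main task is to control $U_{k,*}$. First I would observe that since $(k+1)^{-\beta}\leq 1$ for all $k\geq 0$, the exploration term is bounded above by $\delta I$, so it suffices to show $U_{k,*}\leq \delta\left(X_{\phi\phi}-X_{\phi y}X_{yy}^{-1}X_{y\phi}\right)^{-1}$. The key structural fact, already derived in the proof of Theorem~\ref{theorem:rankone}, is that the constraint matrix satisfies
\begin{align*}
  \mathcal X_{k-1} \geq X_{\phi\phi} - X_{\phi y}X_{yy}^{-1}X_{y\phi} > 0.
\end{align*}
Here I would need the estimated matrix $\mathcal X_{k-1}$ to inherit this same lower bound that holds for the true $\mathcal X$; this follows because the derivation only used $X>0$ together with the nonnegativity of the quadratic-form terms built from the (estimated) Markov parameters, and the first two terms $\sum_i \Omega_{k-1,i}^T X_{yy}\Omega_{k-1,i}$-type contributions are positive semidefinite regardless of the estimates, leaving the $X_{\phi\phi}-X_{\phi y}X_{yy}^{-1}X_{y\phi}$ floor intact. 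The feasibility constraint $\tr(U_{k,*}\mathcal X_{k-1})\leq \delta$ then forces $U_{k,*}$ to be small: combining $\mathcal X_{k-1}\geq X_{\phi\phi}-X_{\phi y}X_{yy}^{-1}X_{y\phi}$ with $U_{k,*}\geq 0$ yields
\begin{align*}
  \tr\!\left(U_{k,*}\big(X_{\phi\phi}-X_{\phi y}X_{yy}^{-1}X_{y\phi}\big)\right)\leq \tr(U_{k,*}\mathcal X_{k-1})\leq \delta.
\end{align*}

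The final step converts this trace bound into a Loewner bound on $U_{k,*}$ itself. Writing $Y\triangleq X_{\phi\phi}-X_{\phi y}X_{yy}^{-1}X_{y\phi}>0$, I would use that $\tr(U_{k,*}Y)\geq \lambda_{\min}(Y)\tr(U_{k,*})$, hence $\tr(U_{k,*})\leq \delta/\lambda_{\min}(Y)$, and since $U_{k,*}\geq 0$ its spectral norm is bounded by its trace, giving $U_{k,*}\leq \big(\delta/\lambda_{\min}(Y)\big)I = \delta\,\lambda_{\min}(Y)^{-1}I$. Noting that $\lambda_{\min}(Y)^{-1}I \leq Y^{-1}$ only when $Y$ is a multiple of the identity, the cleaner route matching the stated bound is to observe directly that $\tr(U_{k,*}Y)\leq\delta$ with $U_{k,*}\geq 0$ implies $U_{k,*}\leq \delta Y^{-1}$ in the Loewner order, via the substitution $\tilde U = Y^{1/2}U_{k,*}Y^{1/2}\geq 0$ with $\tr(\tilde U)\leq\delta$, so that $\tilde U\leq \delta I$ and therefore $U_{k,*}\leq \delta Y^{-1}$. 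Adding back the exploration term bounded by $\delta I$ then yields $U_k\leq \delta(Y^{-1}+I)$, which is exactly the claimed upper bound. The main obstacle is the second step — verifying that the estimated $\mathcal X_{k-1}$ retains the floor $Y$ for every $k$ (including the initialization $\mathcal X_{-1}=X_{\phi\phi}\geq Y$) — since the convergence guarantees for $\mathcal X_k$ are only asymptotic, so this bound must be argued from the algebraic form of \eqref{eq:Xest} and the initialization rather than from any limiting behavior.
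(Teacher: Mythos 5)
Your proof is correct and follows essentially the same route as the paper: the lower bound is immediate from $U_{k,*}\geq 0$, and the upper bound comes from the algebraic floor $\mathcal X_{k-1}\geq X_{\phi\phi}-X_{\phi y}X_{yy}^{-1}X_{y\phi}>0$ (which holds for every $k$, including the initialization, independent of estimation accuracy) combined with the feasibility constraint $\tr(U_{k,*}\mathcal X_{k-1})\leq\delta$. Your congruence argument via $\tilde U=Y^{1/2}U_{k,*}Y^{1/2}$ merely makes explicit the trace-to-Loewner step that the paper asserts in a single line.
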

\begin{proof}
  The first inequality is trivially true since $U_k = U_{k,*} + \delta (k+1)^{-\beta}I$. For the second inequality, notice that
\begin{align*}
  \mathcal X_k &\geq\left(\sum_{i=1}^{\tilde n}\Omega_{k,i}\right)^TX_{yy} \left(\sum_{i=1}^{\tilde n}\Omega_{k,i}\right)+ \sum_{i=1}^{\tilde n} \Omega_{k,i}^T X_{y\phi}\\
               &+ X_{\phi y} \sum_{i=1}^{\tilde n} \Omega_{k,i}+X_{\phi\phi}\\
               &\geq X_{\phi\phi}-X_{\phi y}X_{yy}^{-1}X_{y\phi}.
\end{align*}
As a result, $\tr(U_{k,*} \mathcal X_k)\leq \delta$ implies that
\begin{align*}
U_{k,*}\leq \delta \mathcal X_k^{-1} = \delta \left(X_{\phi\phi}-X_{\phi y}X_{yy}^{-1}X_{y\phi}\right)^{-1},
\end{align*}
and
\begin{align*}
  U_k \leq  U_{k,*}+\delta I =\delta\left( \left(X_{\phi\phi}-X_{\phi y}X_{yy}^{-1}X_{y\phi}\right)^{-1} + I\right).
\end{align*}

\end{proof}
\subsection*{Convergence of $H_{k,\tau}$}
\begin{lemma}
  $H_{k,\tau} - H_\tau \sim \mathcal C(-\gamma)$, with $\gamma = (1-\beta)/2$. In particular, $H_{k,\tau}$ converges to $H_\tau$ almost surely.
\end{lemma}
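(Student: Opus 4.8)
The plan is to substitute $y_t=\varphi_t+\vartheta_t$ with $\varphi_t=\sum_{s=0}^t H_s\phi_{t-s}$ and $\phi_j=U_j^{1/2}\zeta_j$ into the definition of $H_{k,\tau}$, isolate the single summand whose conditional mean is exactly $H_\tau$, and show that everything else is a (finite sum of reindexed) martingale sum to which Lemma~\ref{lemma:MatrixStrongLaw} applies. Writing $H_{k,\tau}-H_\tau=(\mathrm A)+(\mathrm B)+(\mathrm C)$ with
\begin{align*}
(\mathrm A)&=\frac{1}{k-\tau+1}\sum_{t=\tau}^k H_\tau\big(\phi_{t-\tau}\phi_{t-\tau}^T U_{t-\tau}^{-1}-I\big),\\
(\mathrm B)&=\frac{1}{k-\tau+1}\sum_{t=\tau}^k\sum_{\substack{0\le s\le t\\ s\neq\tau}} H_s\,\phi_{t-s}\phi_{t-\tau}^T U_{t-\tau}^{-1},\\
(\mathrm C)&=\frac{1}{k-\tau+1}\sum_{t=\tau}^k \vartheta_t\phi_{t-\tau}^T U_{t-\tau}^{-1},
\end{align*}
the term $(\mathrm A)$ collects the fluctuation of the diagonal ($s=\tau$) summand about its conditional mean $H_\tau$, while $(\mathrm B)$ and $(\mathrm C)$ have conditional mean zero by the pairwise independence of distinct $\phi_j$'s and by the independence of the noise $\vartheta_t$ from the watermark.

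For $(\mathrm A)$ I would reindex by $u=t-\tau$; since $U_u$ is $\mathcal F_{u-1}$-measurable and $\zeta_u$ is independent of $\mathcal F_{u-1}$ with $\mathbb E[\zeta_u\zeta_u^T]=I$, the summand $H_\tau U_u^{1/2}(\zeta_u\zeta_u^T-I)U_u^{-1/2}$ is a matrix-valued martingale difference with respect to $\{\mathcal F_u\}$. The decisive estimate is the lower bound $U_u\ge\delta(u+1)^{-\beta}I$ of Lemma~\ref{lemma:Ubound}, which gives $\|U_u^{-1/2}\|\le\delta^{-1/2}(u+1)^{\beta/2}$, together with the upper bound of the same lemma which keeps $\|U_u^{1/2}\|$ and $\tr(U_u)$ uniformly bounded; hence the second moment of the summand is $\sim O(u^\beta)\sim\mathcal C(\beta)$. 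Lemma~\ref{lemma:MatrixStrongLaw} (with $0\le\beta<1$) then yields $(\mathrm A)\sim\mathcal C((\beta-1)/2)=\mathcal C(-\gamma)$; the constant shift of the index $\tau$ and the $1/(k-\tau+1)$ versus $1/k$ normalization do not affect the order. Term $(\mathrm C)$ is handled identically after the same reindexing: $\vartheta_t$ is a function of the noises alone and is independent of $\phi_{t-\tau}=U_{t-\tau}^{1/2}\zeta_{t-\tau}$, so the summand is a martingale difference, and the steady-state bound $\mathbb E\|\vartheta_t\|^2\to\tr(\mathcal W)$ combined with $\mathbb E\|\phi_{t-\tau}^T U_{t-\tau}^{-1}\|^2\sim O(u^\beta)$ again gives second moment $\sim\mathcal C(\beta)$ and hence $(\mathrm C)\sim\mathcal C(-\gamma)$.

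The cross term $(\mathrm B)$ is where the real care is needed, and I expect it to be the main obstacle: for a fixed summation index $t$, some factors $\phi_{t-s}$ (those with $s<\tau$) lie in the future of $\phi_{t-\tau}$ while others (those with $s>\tau$) lie in its past, so no single filtration indexed by $t$ makes the whole summand a martingale difference. My plan is to split the inner sum at $s=\tau$ and reindex each piece by the later of its two time arguments. The past part $\sum_{s>\tau}H_s\phi_{t-s}$ is $\mathcal F_{t-\tau-1}$-measurable, so after reindexing $u=t-\tau$ its contribution becomes a single martingale difference driven by the fresh $\phi_u$; for each of the finitely many future lags $s<\tau$ I reindex by $v=t-s$, so that $\phi_v$ is fresh while $\phi_{v-(\tau-s)}$ and $U_{v-(\tau-s)}$ are $\mathcal F_{v-1}$-measurable, again producing a martingale difference. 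In every piece the two $\phi$-factors are independent, one carrying a uniformly bounded second moment and the other contributing the $O(u^\beta)$ growth through $U^{-1}$, so each second moment is $\sim\mathcal C(\beta)$ and Lemma~\ref{lemma:MatrixStrongLaw} gives $\mathcal C(-\gamma)$; since $\tau\le 3\tilde n-2$ is finite there are only finitely many such pieces. Adding $(\mathrm A)$, $(\mathrm B)$, $(\mathrm C)$ via the summation rule of Lemma~\ref{lemma:functionclass} yields $H_{k,\tau}-H_\tau\sim\mathcal C(-\gamma)$, and because $-\gamma<0$ this in particular forces $H_{k,\tau}\to H_\tau$ almost surely. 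The difficulty lies entirely in this bookkeeping for $(\mathrm B)$—choosing, term by term, the filtration that makes the newest watermark sample the martingale increment—rather than in any single hard estimate.
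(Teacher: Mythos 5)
Your overall route is sound and genuinely different from the paper's. The paper never expands $y_t$: it keeps $y_{k+\tau}\phi_{k}^{T}U_{k}^{-1}$ as a single unit, proves $\mathbb E\big(y_{k+\tau}\phi_{k}^{T}U_{k}^{-1}\mid\mathcal F_{k-1}\big)=H_\tau$ together with a second-moment bound of class $\mathcal C(\beta)$, and then deals with the fact that consecutive increments overlap (through the $y$ factor) by decimating the time axis into $\tilde\tau=\tau+1$ interleaved subsequences $\mathcal S_{\tau,i}$, each of which is a genuine matrix martingale for the filtration $\{\mathcal F_{k\tilde\tau+i}\}$; Lemma~\ref{lemma:MatrixStrongLaw} is applied to each, and $H_{k,\tau}-H_\tau$ is recovered as a convex combination of the terms $\mathcal S_{\tau,i}(k_i)/k_i$. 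Your structural decomposition $(\mathrm A)+(\mathrm B)+(\mathrm C)$ replaces this decimation trick: by peeling off summands according to which watermark sample is newest, each piece is a martingale sum in a suitably indexed filtration, at the cost of bookkeeping the infinitely many past lags in $(\mathrm B)$ (which you correctly bundle into one predictable factor using the geometric decay of $H_s$). Both routes rest on the same two ingredients, the bound $U_u\geq\delta(u+1)^{-\beta}I$ of Lemma~\ref{lemma:Ubound} and the strong law of Lemma~\ref{lemma:MatrixStrongLaw}, and both give the rate $\mathcal C(-\gamma)$.

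However, two of your justifications are false as stated, and one of them hides a real (though fixable) gap. First, distinct $\phi_j$'s are \emph{not} pairwise independent, and $\vartheta_t$ is \emph{not} independent of $\phi_{t-\tau}$: the covariance $U_j$ is computed from $y_0,\dots,y_{j-1}$, so every $\phi_j$ is correlated with the noise process and with earlier watermarks. What is true---and what your detailed bookkeeping actually uses---is that $\zeta_j$ is independent of everything generated before it, so the relevant conditional means vanish. Second, and more seriously, term $(\mathrm C)$ is not a martingale sum with respect to either filtration you could naturally pick: after reindexing $u=t-\tau$, the summand $\vartheta_{u+\tau}\zeta_u^{T}U_u^{-1/2}$ involves noise up to time $u+\tau$, so it is not $\mathcal F_u$-measurable (adaptedness fails); and if you instead work with $\{\mathcal F_{u+\tau}\}$, then $\mathbb E\big[\vartheta_{u+\tau}\mid\mathcal F_{u+\tau-1}\big]\neq 0$, because $\vartheta_{u+\tau}$ contains old noise that is already measurable, so the martingale-difference property fails. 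The repair is to enlarge the filtration so that $x_{-1}$ and all of $\{w_t\},\{v_t\}$ are measurable at time $-1$ and only the $\zeta_u$'s arrive sequentially; since the $\zeta_u$'s are i.i.d.\ and independent of the noise process, $\vartheta_{u+\tau}$ and $U_u$ become predictable, $\zeta_u$ stays fresh, and $(\mathrm C)$ becomes a true martingale-difference sum to which Lemma~\ref{lemma:MatrixStrongLaw} applies (the pieces of $(\mathrm A)$ and $(\mathrm B)$ already work with the paper's own filtration). With that repair, and with the independence claims restated as freshness of the $\zeta$'s, your argument goes through and delivers the same conclusion as the paper's proof.
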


\begin{proof}
  It is easy to see that $y_k$ and $U_{k+1}$ are measurable w.r.t. $\mathcal F_k$. Furthermore, let $k_1,k_2\geq 0$ be two time indices, then it is easy to prove that
  \begin{align}
    \mathbb E(\phi_{k_1} \phi_{k_2+1}^T|\mathcal F_{k_2}) &= \begin{cases}
      U_{k_2+1} & \text{if }k_1 = k_2 +1\\
      0&\text{otherwise}
    \end{cases},\nonumber\\
    \mathbb E(w_{k_1} \phi_{k_2+1}^T|\mathcal F_{k_2})&= 0,\, \mathbb E(v_{k_1} \phi_{k_2+1}^T|\mathcal F_{k_2})= 0,    \label{eq:expectationfiltration}
  \end{align}
  which, combined with \eqref{eq:yexpansion}, implies that
  \begin{align}
    \mathbb E\left(y_{k+\tau}\phi_{k}^TU_k^{-1}|\mathcal F_{k-1}\right) = H_\tau.
    \label{eq:yphicorelation}
  \end{align} 

  Next we shall compute the expectation of $\|y_{k+\tau}\phi_{k}^TU_{k}^{-1}\|^2$. Notice that from \eqref{eq:phizeta}, $\phi_k = U_k^{1/2}\zeta_k$, where $\zeta_k$ follows the standard normal distribution. Hence,
  \begin{align*}
    \|&y_{k+\tau}\phi_{k}^TU_{k}^{-1}\|^2 = \|y_{k+\tau}\phi_{k}^TU_{k}^{-2}\phi_ky_{k+\tau}^T\|\\
      &\leq \|y_{k+\tau}\|^2\|\zeta_k\|^2 \|U_k^{-1}\| \leq \delta (k+1)^\beta \|y_{k+\tau}\|^2\|\zeta_k\|^2 .
  \end{align*}
  The last inequality is true due to \eqref{eq:Ubound}. As a result, by Cauchy-Schwarz inequality,
  \begin{align*}
    \mathbb E \|y_{k+\tau}\phi_{k}^TU_{k}^{-1}\|^2 \leq \delta(k+1)^{\beta} \sqrt{\mathbb E\|y_{k+\tau}\|^4}\sqrt{\mathbb E\|\zeta_k\|^4}.
  \end{align*}

  Notice that $\|\zeta_k\|$ is $\chi$-distributed with $p$ degree of freedom, which implies that $\mathbb E\|\zeta_k\|^4 = p(p+2)$. On the other hand, one can prove that $\sup_k \mathbb E\|y_{k}\|^4$ is bounded since by \eqref{eq:Ubound}, $U_k$ is upper bounded. As a result, we prove that 
   \begin{align*}
\mathbb E \|y_{k+\tau}\phi_{k}^TU_{k}^{-1}\|^2\sim \mathcal C(\beta),
   \end{align*} 
   which further implies that
  \begin{align}
    \mathbb E&\|y_{k+\tau}\phi_{k}^TU_{k}^{-1}-H_\tau\|^2 \leq \mathbb E \left(\|y_{k+\tau}\phi_{k}^TU_{k}^{-1}\|+\|H_\tau\|\right)^2\nonumber\\
   &\leq \mathbb E \left(2 \|y_{k+\tau}\phi_{k}^TU_{k}^{-1}\|^2 + 2\|H_\tau\|^2\right)\sim \mathcal C(\beta).\label{eq:yphicovariance}
  \end{align} 
As a result, by \eqref{eq:yphicorelation}, one can prove that the following stochastic process is a matrix-valued martingale
  \begin{align}
    \mathcal S_{\tau,i}(k+1) = \mathcal S_{\tau,i}(k) + \left[y_{(k+1)\tilde\tau+i}\phi_{k\tilde\tau+i+1}U_{k\tilde\tau+i+1}^{-1}-H_{\tau}\right]
  \end{align}
  for the filtration $\mathcal F_{k\tilde\tau+i}$, where $\tilde\tau = \tau + 1$, and $0\leq i\leq\tau$. Now by \eqref{eq:yphicovariance} and Lemma~\ref{lemma:MatrixStrongLaw}, we know that
 \begin{align*}
   \frac{\mathcal S_{\tau,i}(k) }{k} \sim \mathcal C(-\gamma).
 \end{align*} 
 From the definition of $\mathcal S_{\tau,i}(k)$, one can see that for large enough $k$,
 \begin{align}
H_{k,\tau} -H_\tau = \sum_{i=0}^\tau \frac{k_i}{k}\times \frac{\mathcal S_{\tau,i}(k_i)}{k_i}.
 \end{align}
 where $k_i = \max\{t\in \mathbb N:t\tilde \tau+i \leq k\}$. Notice that $k_i \geq 0$ and $\sum k_i = k$. Hence, the estimation error of $H_{k,\tau}-H_\tau$ is a convex combination of $\mathcal S_{\tau,i}$s. As a result, for any $\epsilon > 0$
\begin{align}
 \|H_{k,\tau}-H_\tau \| \leq \max_{0\leq i\leq \tau}\frac{\|\mathcal S_{\tau,i}(k_i)\|}{k_i}\sim O(k_i^{-\gamma+\epsilon}),
  \label{eq:HandS}
\end{align}
Notice that when $k$ is large enough, $k/k_i\rightarrow \tau$, which implies that $H_{k,\tau}-H_\tau \sim \mathcal C(-\gamma).$ The a.s. convergence can be trivially proved by the fact that $\gamma>0$ is positive.
\end{proof}

\subsection*{Convergence of $\lambda_{k,i}$ and $\Omega_{k,i}$}

Notice that due to the convergence of $H_{k,\tau}$ to $H_\tau$, we have that $\Xi_k$ converges to $\Xi$, where
\begin{align*}
\Xi \triangleq    \begin{bmatrix}
      \tr(\mathcal H_{0}^T\mathcal H_{0}) & \cdots & \tr(\mathcal H_{0}^T\mathcal H_{\tilde n-1}) \\
      \vdots&\ddots&\vdots\\
      \tr(\mathcal H_{\tilde n-1}^T\mathcal H_{0}) & \cdots & \tr(\mathcal H_{\tilde n-1}^T\mathcal H_{\tilde n-1}) \\
    \end{bmatrix},
\end{align*}
with
\begin{align*}
  \mathcal H_{i}\triangleq\begin{bmatrix}
    H_{i}\\
    \vdots \\
    H_{i+2\tilde n-2}\\
  \end{bmatrix}.
\end{align*}

We shall first prove that $\Xi$ is invertible. Suppose that there exists $\tilde \alpha = \left[\tilde \alpha_0,\ldots,\tilde \alpha_{\tilde n-1}\right]^T$, such that $\Xi\tilde \alpha = 0$, then 
\begin{align*}
0= \tilde \alpha^T\Xi\tilde \alpha = \left\|\sum_{i=0}^{\tilde n-1}\mathcal H_i\tilde \alpha_i\right\|_F^2,
\end{align*}
which further implies that $CA^{i}\tilde p(A)B = 0$ for all $0\leq i\leq 2n-2$, and $\tilde p(x) = \tilde \alpha_{\tilde n-1}x^{\tilde n-1}+\cdots+\tilde \alpha_0$. Hence, we know that
\begin{align*}
 \begin{bmatrix}
   C\\
   \vdots\\
  CA^{\tilde n-1}
\end{bmatrix}\tilde p(A)\begin{bmatrix}
B&\cdots&A^{\tilde n-1}B
\end{bmatrix}=0.
\end{align*}
By the fact that $(A,B)$ is controllable and $(A,C)$ is observable, $\tilde p(A)$ must be $0$. However, since $p(x)$ is minimal polynomial of $A$, $\tilde p(x)$ must be constantly $0$, which proves that $\tilde \alpha = 0$ and $\Xi$ is invertible.

Let us denote $\alpha_i$s as the coefficients of the minimal polynomial $p(x) = x^{\tilde n}+\alpha_{\tilde n-1}x^{\tilde n-1}+\cdots+\alpha_0$ of $A$, i.e., the monic polynomial with minimum degree. As a result, we have
\begin{align}
H_{i+\tilde n}+\alpha_{\tilde n-1}H_{i+\tilde n-1}+\cdots+\alpha_0H_i = CA^ip(A)B = 0.
  \label{eq:minimalH}
\end{align}
Hence, one can prove that,
\begin{align}
 \begin{bmatrix}
      \alpha_{0}\\
      \vdots\\
      \alpha_{\tilde n-1}
    \end{bmatrix} =-\Xi^{-1}\begin{bmatrix}
      \tr(\mathcal H_{0}^T\mathcal H_{\tilde n})\\
      \vdots\\
      \tr(\mathcal H_{\tilde n-1}^T\mathcal H_{\tilde n})
    \end{bmatrix},
  \label{eq:alphapolynomial}
\end{align}
 which, combined with the fact that $H_{k,\tau}-H_\tau\sim \mathcal C(-\gamma)$ and Lemma~\ref{lemma:functionclass}.3, proves that $\alpha_{k,i}-\alpha_k\sim \mathcal C(-\gamma)$. Since all the roots of the polynomial $p(x)$ are distinct, we can prove (see \cite{polynomialroots}) that $\lambda_{k,i}$s are differentiable functions of $\alpha_{k,i}$s at a neighborhood of $\alpha_i$, which further proves that $\lambda_{k,i}-\lambda_i\sim \mathcal C(-\gamma)$.

 Now let us define $V$ to be
 \begin{align*}
  V\triangleq \begin{bmatrix}
    1 & 1 &\cdots&1\\
    \lambda_{1} & \lambda_{2} &\cdots&\lambda_{{\tilde n}}\\
    \vdots & \vdots &\ddots&\vdots\\
    \lambda^{3\tilde n-2}_{1} & \lambda^{3\tilde n-2}_{2} &\cdots&\lambda^{3\tilde n-2}_{n}\\
  \end{bmatrix}.
 \end{align*}
Since $p(x)$ is the minimal polynomial of $A$ and $A$ is diagonalizable, the roots $\lambda_i$s of $p(x)$ are distinct, which proves that $V$ is of full column rank, i.e., $\rank(V) = \tilde n$. Therefore,
\begin{align*}
  \rank(V\otimes I_m) = \rank(V)\times \rank(I_m) = \tilde nm,
\end{align*}
which implies that $V\otimes I_m$ is of full column rank. 

Therefore, by Lemma~\ref{lemma:finitetoinf},
\begin{align}
  \begin{bmatrix}
    \Omega_{1}\\
    \vdots\\
    \Omega_{\tilde n}
  \end{bmatrix} =  \left(V\otimes I_m \right)^+\begin{bmatrix}
      H_{0}\\
      \cdots\\
      H_{3\tilde n-2}
    \end{bmatrix}.
\end{align}
Hence, by Lemma~\ref{lemma:functionclass}.3, $\Omega_{k,i} - \Omega_i\sim\mathcal C(-\gamma)$.

\subsection*{Convergence of $\hat \varphi_k$, $\hat \vartheta_k$ and $\mathcal W_k$}
First we need to prove that $\hat \varphi_k-\varphi_k\sim \mathcal C(-\gamma)$, which holds as long as $\hat \varphi_{k,i}-\varphi_{k,i}\sim \mathcal C(-\gamma)$ for all $i$, where
\begin{align*}
   \varphi_{k,i} = \lambda_{i } \varphi_{k-1,i} + \Omega_{i} \phi_{k},\,\varphi_{-1,i}=0.
\end{align*}

Notice that the error between $\hat \varphi_{k,i}$ and $\varphi_{k,i}$ satisfies the following recursive equation:
  \begin{align*}
    \varphi_{k+1,i}-\hat\varphi_{k+1,i} &= (\lambda_i-\lambda_{k,i})\varphi_{k,i}+\lambda_{k,i}(\varphi_{k,i}-\hat\varphi_{k,i})  \\
    &+ (\Omega_i-\Omega_{k,i})\phi_k.
  \end{align*}
  For any $\epsilon > 0$, we have
  \begin{align*}
    \frac{\|\varphi_{k+1,i}-\hat\varphi_{k+1,i}\|}{(k+1)^{-\gamma+2\epsilon}} &\leq |\lambda_{k,i}|\frac{\|\varphi_{k,i}-\hat\varphi_{k,i}\|}{k^{-\gamma+2\epsilon}}  +\frac{|\lambda_i-\lambda_{k,i}|}{k^{-\gamma+\epsilon}}\frac{\|\varphi_{k,i}\|}{k^{\epsilon}}\\
    &+ \frac{\|\Omega_i-\Omega_{k,i}\|}{k^{-\gamma+\epsilon}}\frac{\|\phi_k\|}{k^\epsilon}.
  \end{align*}

  Notice that $\phi_k = U_k^{1/2}\zeta_k$. Since $\zeta_k\sim \mathcal C(0)$ by Lemma~\ref{lemma:functionclass}.6, and $U_k$ is upper bounded by Lemma~\ref{lemma:Ubound}, $\phi_k\sim \mathcal C(0)$. Thus, $\varphi_{k,i}\sim \mathcal C(0)$ by Lemma~\ref{lemma:functionclass}.4. Furthermore, since $\lambda_{k,i}-\lambda_i\sim \mathcal C(-\gamma)$ and $\Omega_{k,i}-\Omega_i\sim \mathcal C(-\gamma)$ , for any $\epsilon_1 > 0$, there exists $K$ (possibly random), such that for any $k \geq K$, the following inequalities hold almost surely,
  \begin{align*}
  |\lambda_i-\lambda_{k,i}|\leq \epsilon_1,\, \frac{|\lambda_i-\lambda_{k,i}|}{k^{-\gamma+\epsilon}}\frac{\|\varphi_{k,i}\|}{k^{\epsilon}}+ \frac{\|\Omega_i-\Omega_{k,i}\|}{k^{-\gamma+\epsilon}}\frac{\|\phi_k\|}{k^\epsilon}\leq \epsilon_1.
  \end{align*}
   Therefore, for $k\geq K$, we have
  \begin{align*}
    \frac{\|\varphi_{k+1,i}-\hat\varphi_{k+1,i}\|}{(k+1)^{-\gamma+2\epsilon}}  &\leq (|\rho|+\epsilon_1)\times \frac{\|\varphi_{k,i}-\hat\varphi_{k,i}\|}{k^{-\gamma+2\epsilon}} +  \epsilon_1.\;a.s.
  \end{align*}
  Now since $|\rho|<1$, we can choose $\epsilon_1$ small enough such that $|\rho|+\epsilon_1 < 1$, therefore,
  \begin{align*}
    \limsup_{k\rightarrow\infty}\frac{\|\varphi_{k,i}-\hat\varphi_{k,i}\|}{k^{-\gamma+2\epsilon}}\leq \frac{\epsilon_1}{1-|\rho|-\epsilon_1}.\;a.s.
  \end{align*}
Hence, $\|\varphi_{k,i}-\hat\varphi_{k,i}\|/k^{-\gamma+3\epsilon}\asrightarrow 0$, which proves that $\varphi_{k,i}-\hat\varphi_{k,i}\sim\mathcal C(-\gamma)$.
\begin{lemma}
\begin{align}
 \frac{1}{k+1}\sum_{t=0}^k \vartheta_t \vartheta_t^T - \mathcal W \sim \mathcal C(-0.5),
\end{align}
where $\vartheta_k \triangleq \sum_{t=0}^{k}  CA^{t} w_{k-t}+v_{k} + CA^{k+1} x_{-1}$.
\end{lemma}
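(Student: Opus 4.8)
The plan is to exploit the fact that $\vartheta_k$ is driven purely by the process and measurement noise and is therefore a stationary process, completely independent of the watermark design. Writing $\bar x_k = A\bar x_{k-1}+w_k$ with $\bar x_{-1}=x_{-1}$, we have $\vartheta_k = C\bar x_k + v_k$; moreover, since $x_{-1}$ is drawn from the steady-state covariance $\Sigma$ of \eqref{eq:Sigmadef}, the recursion preserves $\Cov(\bar x_k)=A\Sigma A^T+Q=\Sigma$, so a direct computation gives $\Cov(\vartheta_k)=C\Sigma C^T+R=\mathcal W$ exactly for every $k$. Expanding the empirical covariance then yields
\begin{align*}
\frac{1}{k+1}\sum_{t=0}^k \vartheta_t\vartheta_t^T - \mathcal W
={}& C\left(\frac{1}{k+1}\sum_{t=0}^k(\bar x_t\bar x_t^T-\Sigma)\right)C^T \\
&+ \frac{1}{k+1}\sum_{t=0}^k(v_tv_t^T-R) \\
&+ C\,\frac{1}{k+1}\sum_{t=0}^k \bar x_t v_t^T + \left(C\,\frac{1}{k+1}\sum_{t=0}^k \bar x_t v_t^T\right)^T .
\end{align*}
It therefore suffices to show that each of the four pieces is $\mathcal C(-0.5)$ and invoke Lemma~\ref{lemma:functionclass}.1.

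The second and third pieces I would dispatch directly through the matrix strong law. With respect to the filtration $\{\mathcal F_k\}$, the sequence $\{v_tv_t^T-R\}$ is i.i.d. and mean zero, while $\mathbb E(\bar x_t v_t^T|\mathcal F_{t-1})=0$ because $v_t$ is mean zero and independent of $\mathcal F_{t-1}$ and of $w_t$; hence both partial sums are matrix-valued martingales. Since $\bar x_t$ and $v_t$ are Gaussian with uniformly bounded fourth moments (immediate from stationarity, in the same spirit as the moment bounds already used to prove $H_{k,\tau}-H_\tau\sim\mathcal C(-\gamma)$), the per-term second moments are $\mathcal C(0)$, so Lemma~\ref{lemma:MatrixStrongLaw} gives that both averages are $\mathcal C((0-1)/2)=\mathcal C(-0.5)$.

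The substantive step is the state sample covariance $\frac{1}{k+1}\sum_{t=0}^k(\bar x_t\bar x_t^T-\Sigma)$, where $\{\bar x_t\bar x_t^T\}$ is temporally correlated and hence not itself a martingale. I would write $\bar x_t\bar x_t^T-\Sigma = D_t + A P_{t-1}A^T$, with $P_t\triangleq \bar x_t\bar x_t^T-\Sigma$ and $D_t\triangleq \bar x_t\bar x_t^T-\mathbb E(\bar x_t\bar x_t^T|\mathcal F_{t-1})$, using $\mathbb E(\bar x_t\bar x_t^T|\mathcal F_{t-1})=A\bar x_{t-1}\bar x_{t-1}^TA^T+Q$ together with $\Sigma-Q=A\Sigma A^T$. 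Here $\{D_t\}$ is a martingale difference with $\mathbb E\|D_t\|^2\sim\mathcal C(0)$. Summing the recursion and setting $S_k\triangleq\sum_{t=0}^k P_t$ produces the Lyapunov-type identity $S_k - AS_kA^T = \sum_{t=0}^k D_t - A(P_k-P_{-1})A^T$. Because $A$ is strictly stable the operator $S\mapsto S-ASA^T$ is boundedly invertible (equivalently $A\otimes A$ is strictly stable, as in Lemma~\ref{lemma:functionclass}.5), so $S_k/(k+1)$ is a fixed bounded linear image of $\frac{1}{k+1}\sum_{t=0}^k D_t - A\frac{P_k-P_{-1}}{k+1}A^T$; the first average is $\mathcal C(-0.5)$ by Lemma~\ref{lemma:MatrixStrongLaw}, and $P_k\sim\mathcal C(0)$ is bounded so $P_k/(k+1)\sim\mathcal C(-1)\subset\mathcal C(-0.5)$. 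Thus $S_k/(k+1)\sim\mathcal C(-0.5)$, and combining all four pieces proves the claim.

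The main obstacle is exactly this temporal correlation of $\vartheta_t$ (inherited from $\bar x_t$), which blocks a one-shot application of the martingale law. The resolution is to absorb the residual conditional mean $A P_{t-1}A^T$ through the strictly stable Lyapunov operator, which converts the correlated partial sum back into a bounded linear image of a genuine martingale average; the strict stability of $A$ is what makes this operator invertible and what forces the rate to be the clean $\mathcal C(-1/2)$, independent of $\beta$, as one expects since $\vartheta_t$ carries no dependence on the watermark covariance $U_k$.
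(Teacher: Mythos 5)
Your proof is correct, but it is organized quite differently from the paper's, so a comparison is worthwhile. The paper never splits $\vartheta_t$ into $C\bar x_t+v_t$: instead it establishes, by induction, the single decomposition $\sum_{t=0}^k\vartheta_t\vartheta_t^T=\mathcal M_k-CA\mathcal N_kA^TC^T$, where $\mathcal M_k$ is one matrix-valued martingale whose increments $\Pi_k$ are engineered so that $\mathbb E(\Pi_k\mid\mathcal F_{k-1})=\mathcal W$ exactly, and $\mathcal N_k$ obeys a stable recursion $\mathcal N_k=A\mathcal N_{k-1}A^T+(\text{driving terms in }\mathcal C(0))$, so that $\mathcal N_k\sim\mathcal C(0)$ by Lemma~\ref{lemma:functionclass}.5 and $\mathcal M_k/k-\mathcal W\sim\mathcal C(-0.5)$ by Lemma~\ref{lemma:MatrixStrongLaw}. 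The device that makes the increments exactly conditionally unbiased is the operator $\mathcal A(X)=\sum_{i\geq 0}A^i\tfrac{X+X^T}{2}A^{iT}$, which is precisely the inverse of the Lyapunov map $\mathcal L: S\mapsto S-ASA^T$ that you invert. In other words, the two proofs are dual: the paper \emph{pre-compensates}, crediting each noise term's entire future contribution to the sample covariance at the instant the noise occurs, while you \emph{post-compensate}, summing the raw recursion $P_t=D_t+AP_{t-1}A^T$ and applying $\mathcal L^{-1}$ to the averaged identity. The net effect is the same --- a matrix martingale controlled at rate $\mathcal C(-1/2)$ plus a remainder that is $\mathcal C(0)$ before division by $k$ --- but your organization buys modularity and transparency (the three martingales $D_t$, $v_tv_t^T-R$, $\bar x_tv_t^T$ are elementary, and the Lyapunov identity is a two-line computation, versus the paper's induction and the verification that $\mathbb E(\Pi_k\mid\mathcal F_{k-1})=\mathcal W$), whereas the paper's buys compactness, handling everything inside one martingale.

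One detail you should patch: the step $P_k/(k+1)\sim\mathcal C(-1)$ needs $P_k\sim\mathcal C(0)$, i.e.\ $\bar x_k\sim\mathcal C(0)$, which you assert (with the slightly misleading word ``bounded''; $P_k$ is not a.s.\ uniformly bounded) but do not justify, and Lemma~\ref{lemma:functionclass}.6 does not apply verbatim since $\bar x_k$ is not i.i.d. The fix is immediate using only the paper's lemmas: $w_k\sim\mathcal C(0)$ by Lemma~\ref{lemma:functionclass}.6, hence $\bar x_k\sim\mathcal C(0)$ by Lemma~\ref{lemma:functionclass}.5 applied to the recursion $\bar x_k=A\bar x_{k-1}+w_k$ (with strictly stable $A$), and then $P_k=\bar x_k\bar x_k^T-\Sigma\sim\mathcal C(0)$ by Lemma~\ref{lemma:functionclass}.1.
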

\begin{proof}
  Let us define function $\mathcal A:\mathbb R^{n\times n}\rightarrow\mathbb R^{n\times n}$, such that for any symmetric matrix $X\in \mathbb S^{n\times n}$,
  \begin{align*}
    \mathcal A(X) = X + AXA^T + A^2XA^{2T}+\cdots,
  \end{align*}
  For non-symmetric $X\in\mathbb R^{n\times n}$, we define
  \begin{align*}
    \mathcal A(X) = \mathcal A\left(\frac{X+X^T}{2}\right).
  \end{align*}
  One can prove that
  \begin{align*}
    \mathcal A(X) - A^k\mathcal A(X)A^{kT} = \sum_{i=0}^{k-1} A^i\frac{X+X^T}{2}A^{iT}.
  \end{align*}
  To simplify notations, let us define $w_{-1} = x_{-1}$. By mathematical induction, $\sum_{t=0}^k\vartheta_t\vartheta_t^T$ can be written as
  \begin{align}
    \sum_{t=0}^k\vartheta_t\vartheta_t^T=\mathcal M_k - CA\mathcal N_k A^TC^T,
  \end{align}
  where
  \begin{align}
    \mathcal M_k &= \mathcal M_{k-1} + \Pi_k\\
    \mathcal N_k &= A\mathcal N_{k-1}A^T+2 \mathcal A\left(\left(\sum_{t=-1}^kA^{k-t}w_t\right)w_k^T\right) - \mathcal A(w_kw_k^T).
  \end{align}
  with
\begin{align*}
 \Pi_k &=   v_kv_k^T +  v_k\left(\sum_{t=-1}^k CA^{k-t}w_t\right)^T+\left(\sum_{t=-1}^k CA^{k-t}w_t\right)v_k^T\nonumber\\
    &+ 2C\mathcal A\left(\left(\sum_{t=-1}^{k} A^{k-t}w_t\right)w_k^T\right)C^T - C\mathcal A\left(w_kw_k^T\right)C^T,
\end{align*}
and initial condition 
\begin{align}
  \mathcal N_{-1} = \mathcal A\left(x_{-1}x_{-1}^T\right),\, \mathcal M_{-1} = CA\mathcal N_{-1}A^TC^T.
\end{align}

One can then prove that
\begin{align*}
  \mathbb E(\Pi_k|\mathcal F_{k-1}) = \mathcal W,\, \mathbb E\|\Pi_k-\mathcal W\|^2\sim O(1).
\end{align*}
Hence, $\mathcal M_k - k\mathcal W$ is a martingale and $\mathcal M_k/k -\mathcal W \sim \mathcal C(-0.5)$ by Lemma~\ref{lemma:MatrixStrongLaw}. On the other hand, for $\mathcal N_k$, since $A\otimes A$ is stable, $\mathcal N_k\sim \mathcal C(0)$ by Lemma~\ref{lemma:functionclass}, which proves that 
\begin{align*}
  \frac{1}{k+1}\sum_{t=0}^k \vartheta_t\vartheta_t^T - \mathcal W \sim \mathcal C(-0.5).
\end{align*}
\end{proof}

Now we can rewrite $\mathcal W_k - \mathcal W$ as
\begin{align*}
  \mathcal W_k &- \mathcal W = \left(\frac{1}{k+1}\sum_{t=0}^k \vartheta_t\vartheta_t^T   - \mathcal W\right)  \\
               &- \frac{1}{k+1}\sum_{t=0}^k\left( \vartheta_t (\hat \phi_t - \phi_t)^T + (\hat \phi_t - \phi_t)\vartheta_t^T\right)\\
  &+ \frac{1}{k+1}\sum_{t=0}^k(\hat \phi_t - \phi_t)(\hat \phi_t - \phi_t)^T,
\end{align*}
By Lemma~\ref{lemma:functionclass}, $\mathcal W_k-\mathcal W \sim \mathcal C(\max\{-0.5,-\gamma,-2\gamma\}) = \mathcal C(-\gamma)$.
\subsection*{Convergence of the Rest}
By Lemma~\ref{lemma:functionclass}.3, one can prove that $\mathcal P_k-\mathcal P$, $\mathcal X_k-\mathcal X$ are all of the class $\mathcal C(-\gamma)$, as they are differentiable functions of $\lambda_{k,i}$, $\Omega_{k,i}$ and $\mathcal W_k$. Therefore, $ U_{k,*}-U_*\sim \mathcal C(-\gamma)$ since $U_{k,*}$ is a differentiable function of $\mathcal P_k$ and $\mathcal X_k$ at a neighborhood of $\mathcal P$ and $\mathcal X$ (see \cite{polynomialroots}).

Hence, one can prove that $\mathcal U_k - \mathcal U\sim \mathcal C(-\gamma)$, as $\mathcal U_k$ is a differentiable function of $\lambda_{k,i}$, $\Omega_{k,i}$ and $U_{k,*}$.

Finally we prove that $\hat g_k-g_k\sim\mathcal C(-\gamma)$ due to Lemma~\ref{lemma:functionclass}.1.
\bibliographystyle{IEEEtran}
\bibliography{reference}
\end{document}